\documentclass[submission,copyright]{eptcs}
 % Name of the event you are submitting to
% \usepackage{breakurl}             % Not needed if you use pdflatex only.

\usepackage{amsmath}
\usepackage{amsthm}
\usepackage{amssymb}
\usepackage{tikz}
\usetikzlibrary{calc}
\usepackage[arc,all]{xy}
\usepackage{graphicx}
\usepackage{wrapfig}

\usepackage[only,llbracket,rrbracket]{stmaryrd}

\let\Sect\S
\def\S{\mathit{S}}

\newcommand{\mathrotate}[2]{\text{\rotatebox[origin=c]{#1}{$#2$}}}

%%%%%%%%%%%%%%%%%%%%%%%%%%%%%%%%%%%%%%%%%%%%%%%%%%%%%%%%%%%%%%%%
%                                                              %
% Defining symbols                                             %
%                                                              %
% Church's dot                                                 %

%                                                              %
% Composition                                                  %
\newcommand{\cmp}{\mathrel{\circ}}
%                                                              %
% Restriction of #1 to #2                                      %
\newcommand{\rest}[2]{{#1}{\upharpoonright}{#2}}
%                                                              %
% Scott brackets                                               %
\newcommand{\Scott}[1]{\llbracket{#1}\rrbracket}
%                                                              %
% Pullbacks                                                    %
% \DeclareMathSymbol{\pbnormal}{\mathord}{MSA}{"79}
% \newcommand{\pb}{\text{\scalebox{2}{$\pbnormal$}}}
\newcommand{\pb}[1][0,0]{%
	\begin{picture}(0,0)(#1)%
	\put(8,-8){\line(0,1){8}}%
	\put(8,-8){\line(-1,0){8}}%
	\end{picture}%
}
%                                                              %
% Powerset monad                                               %
\newcommand{\pw}{\mathcal{P}}
%                                                              %
% Hom set                                                      %

%                                                              %
% Opposite category                                            %
\newcommand{\op}{\mathrm{op}}
%                                                              %
% Semirings                                                    %
\newcommand{\RRp}{{\mathbb{R}_{\geqslant 0}}}
\newcommand{\RR}{\mathbb{R}}
\newcommand{\BB}{\mathbb{B}}
%                                                              %
% Distribution monad                                           %
\newcommand{\D}{\mathcal{D}}
%                                                              %
% Powerset-dependent distribution monad                        %
\newcommand{\T}{T}
%                                                              %
% Support                                                      %
\newcommand{\supp}[1]{\textrm{supp}(#1)}
%                                                              %
% Rooted                                                       %

%                                                              %
% Relation                                                     %
% \newcommand{\rel}[2][-0.5pc]{\xymatrix@1@C#1{\ar[0,1]^-*{#2}\ar@{}[0,1]|-*{\shortmid}&}}
\newcommand{\rel}{\mathrel{\text{\ooalign{$\to$\crcr\hss\raisebox{0.08ex}{$\shortmid\mspace{1.25mu}$}\hss}}}}
%                                                              %
% Sets topos                                                   %
\newcommand{\Sets}{\textbf{Sets}}
%                                                              %
% Category of fibrations                                       %
\newcommand{\Fib}{\textbf{Fib}}
%                                                              %
% Category of semiring-weighted maps                           %
\newcommand{\Map}{\textbf{-Map}}
%                                                              %
% Category of semiring-weighted fibrations                     %

%                                                              %
% Category of relations                                        %
\newcommand{\Rel}{\textbf{Rel}}
%                                                              %
% Category of open bundles                                     %

%                                                              %
% Category of semiring-weighted relations                      %
\newcommand{\RRel}{\textbf{-}\Rel}
%                                                              %
% Category of stochastic maps                                  %
\newcommand{\Stoch}{\textbf{Stoch}}
%                                                              %
% Kleisli category                                             %
\newcommand{\Kl}{\mathrm{Kl}}
%                                                              %
% Category of semiring-weighted open bundles                   %
% \newcommand{\BBnd}{\textbf{-}\Bnd}
%                                                              %
% Equivalence of categories                                    %
\newcommand{\cateq}{\simeq}
%                                                              %
% Necessity and possibility                                    %
\newcommand{\nec}[1]{\mathop{[\mspace{1mu}#1\mspace{1mu}]}}
\newcommand{\pos}[1]{\mathop{\langle #1 \rangle}}
%                                                              %
% Determinacy predicate                                        %
\newcommand{\Det}{\textrm{Det}}
%                                                              %
% Set of measurements                                          %
% \newcommand{\M}{\mathcal{M}}
\newcommand{\M}{M}
%                                                              %
% Domain and codomain of an arrow                              %
% \def\dom{\textrm{dom}}
% \def\cod{\textrm{cod}}
%                                                              %
% Isomorphism                                                  %
% \newcommand{\iso}{\cong}
%                                                              %
% Natural isomorphism                                          %
% \newcommand{\simto}{\mathbin{\mathop{\smash{\to}}\limits^{\mbox{\raise.5ex\hbox{$\sim$}}}}}
%                                                              %
%%%%%%%%%%%%%%%%%%%%%%%%%%%%%%%%%%%%%%%%%%%%%%%%%%%%%%%%%%%%%%%%

%%%%%%%%%%%%%%%%%%%%%%%%%%%%%%%%%%%%%%%%%%%%%%%%%%%%%%%%%%%%%%%%
%                                                              %
% Environments                                                 %
%                                                              %
% Theorems                                                     %
%                                                              %
\newcommand{\NewTheorem}[2]{%
	\newtheorem{#1}{#2}
	\expandafter\def\csname #1autorefname\endcsname{#2}
}
\NewTheorem{theorem}{Theorem}
\NewTheorem{lemma}{Lemma}
\NewTheorem{proposition}{Proposition}
%                                                              %
% Definitions (written upright)                                %
%                                                              %
\theoremstyle{definition}
\NewTheorem{corollary}{Corollary}
\NewTheorem{definition}{Definition}
\NewTheorem{example}{Example}
\NewTheorem{fact}{Fact}
\NewTheorem{observation}{Observation}
\NewTheorem{remark}{Remark}
%                                                              %
%%%%%%%%%%%%%%%%%%%%%%%%%%%%%%%%%%%%%%%%%%%%%%%%%%%%%%%%%%%%%%%%

%%%%%%%%%%%%%%%%%%%%%%%%%%%%%%%%%%%%%%%%%%%%%%%%%%%%%%%%%%%%%%%%
%                                                              %
% Other format settings                                        %
%                                                              %
% \renewcommand{\sectionautorefname}{Section}
% \renewcommand{\subsectionautorefname}{Subsection}

% \renewcommand{\labelitemii}{$\circ$}
%                                                              %
%%%%%%%%%%%%%%%%%%%%%%%%%%%%%%%%%%%%%%%%%%%%%%%%%%%%%%%%%%%%%%%%

\title{%
Stochastic Relational Presheaves and\\
Dynamic Logic for Contextuality%
}
\author{
Kohei Kishida\thanks{%
Kishida's research has been supported by the grant FA9550-12-1-0136 of the U.S. AFOSR\@.
He is grateful to the audience at QPL 2014 as well as to the anonymous referees for their insightful comments and suggestions, which improved this paper.
Grateful acknowledgment also goes to the audience at the 2013 Workshop on Information and Processes, at which an earlier version of this paper was presented, and in particular to Prof.\ Michael Mislove and Tulane University, who hosted and funded the workshop.
}
\institute{Department of Computer Science\\
University of Oxford\\
Oxford, United Kingdom}
\email{kohei.kishida@cs.ox.ac.uk}
}

\begin{document}
\maketitle

\begin{abstract}
Presheaf models \cite[etc.]{cat97,win97}\ provide a formulation of labelled transition systems that is useful for, among other things, modelling concurrent computation.
This paper aims to extend such models further to represent stochastic dynamics such as shown in quantum systems.
After reviewing what presheaf models represent and what certain operations on them mean in terms of notions such as internal and external choices, composition of systems, and so on, I will show how to extend those models and ideas by combining them with ideas from other category-theoretic approaches to relational models \cite{her11} and to stochastic processes \cite[etc.]{fri09,bae11,jac11}.
It turns out that my extension yields a transitional formulation of sheaf-theoretic structures that Abramsky and Brandenburger \cite{abr11} proposed to characterize non-locality and contextuality.
An alternative characterization of contextuality will then be given in terms of a dynamic modal logic of the models I put forward.
\end{abstract}

\section{Introduction}

The goal of this paper is to devise a formalism of semantic structure for dynamic logic that is suitable for expressing stochastic dynamics such as shown in quantum systems.
Essential features of stochastic dynamics I aim to capture include
\begin{itemize}
\item
the distinction and interaction between \emph{internal} and \emph{external choices}, that is, non-deterministic branchings that are made within a system and that are made by external agents or experimenters;
\item
the distinction and interaction between what is \emph{globally} the case in an entire system and what is \emph{locally} the case in a subsystem.
\end{itemize}
In particular, the resulting semantics and logic shall be general enough to accommodate both the presence and the absence of (typically quantum) \emph{non-locality} and \emph{contextuality}, but at the same time expressive enough to provide logical characterization for non-locality and contextuality.

I achieve my goal by integrating three frameworks of categorical approaches that have been proposed to modelling non-deterministic and stochastic processes.
Firstly, my formalism will be based on
\begin{enumerate}
\item\label{item:presheaf}
Presheaves as labelled transition models for concurrency (Winskel \textit{et al}.\ \cite{cat97,win97}, etc.).
I show how the presheaf structure can be used to capture notions that are essential to my goal, such as internal and external choices, composition of multipartite systems, and so on.
\end{enumerate}
Then I extend this setting in two aspects, by admitting non-trees and by adding probabilities.
I attain these extensions by integrating the following ideas into my formalism.
\begin{enumerate}
\addtocounter{enumi}{1}
\item\label{item:rel}
Kripke relational semantics in terms of Kripke frames as functors from labels to the category $\Rel$ of sets and relations (Hermida \cite{her11}).
Integrating this idea with the presheaf framework admits presheaf-like models as transition systems of non-tree forms.
I will also lay out motivation for admitting non-trees.
(One mode of this integration has already been given in Soboci\'nski \cite{sob12};
yet the mode of integration I propose in this paper is different and not equivalent.)
\item\label{item:stoch}
The category of stochastic maps, or equivalently the Kleisli category of the distribution monad (the idea goes back at least to Lawvere \cite{law62}; it is also studied recently by Fritz \cite{fri09}, Baez \textit{et al}.\ \cite{bae11}, Fong \cite{fon12}, etc., in the former formulation, by Jacobs \cite{jac11}, etc., in the latter formulation).
How to add probabilities to presheaf models is a question posed in the concluding part of Varacca \cite{var03};
I answer this question by using structures closely related, though not equivalent, to stochastic maps.
\end{enumerate}
These extensions give semantic structures on which I define a dynamic and probabilistic logic.

To demonstrate that the resulting semantics and logic achieve the goal mentioned above, I will show how they capture non-locality and contextuality.
In particular, the semantics gives an alternative, transitional formulation to
%
%
% \begin{itemize}
% \item
% A
a sheaf-theoretic approach to non-locality and contextuality (Abramsky and Brandenburger \cite{abr11}, etc.).
This approach provides a sheaf-theoretic expression for, among other things, measurement scenarios in quantum mechanics, and characterizes non-locality and contextuality found in such scenarios in terms of non-existence of global sections.
The transitional formulation I give to this approach leads to an alternative, dynamic-logical characterization of non-locality and contextuality.
% \end{itemize}

\section{Presheaf Models for Measurements}\label{sec:presheaf}

This section reviews presheaves over trees as labelled transition systems (see \cite{cat97,win97}).
Rather than giving new definitions or theorems, this section is concerned with conceptually laying out how to use the familiar notions of presheaf and fibration to represent features of non-deterministic processes that are essential to the goal of this paper.

\subsection{Trees and Presheaves of Non-Deterministic Choices}

Here I lay out the key idea of how to use a presheaf over a tree as a labelled transition system, or LTS for short, in a manner suitable for representing different kinds of non-determinacy in stochastic processes.

As in the standard terminology, by a ``measurement scenario of $(n, k, \ell)$-type'' let us mean a Bell-type scenario of (typically quantum) measurements that involves $n$ parts (or experimenters), each of which (or whom) chooses one from $k$ measurements, each of which has $\ell$ outcomes.
For instance, in a $(1, 2, 2)$ scenario, Alice chooses one from two measurements, $a$ and $a'$, each of which has two outcomes, $0$ and $1$.
This simple scenario can be represented by the following tree $L$ and presheaf $\S$ over $L$.
%
%
% \begin{figure}[h]\centering
% \begin{wrapfigure}{l}[0pt]{150pt}
\begin{align*}
\begin{tikzpicture}[x=40pt,y=40pt,thick,label distance=-0.25em,baseline=(current bounding box.center)]
\node (l) at (-1.25,0) {};
\node (r) at (1.25,0) {};
\node (d) at (0,-0.25) {};
\node (u) at (0,0.25) {};
\node [inner sep=0.1em,label=below:{$x$}] (R) at (0,0) {$\circ$};
\node [inner sep=0.1em,label=below:{$y$}] (Ra) at ($ (R) + (l) $) {$\circ$};
\node [inner sep=0.1em,label=below:{$z$}] (Ra') at ($ (R) + (r) $) {$\circ$};
\node [inner sep=0.1em,label=above:{$s$}] (O) at ($ (R) + (0,0.875) $) {$\bullet$};
\node [inner sep=0.1em,label=left:{$0$}] (Oa0) at ($ (O) + (l) + (u) $) {$\bullet$};
\node [inner sep=0.1em,label=left:{$1$}] (Oa1) at ($ (O) + (l) + (d) $) {$\bullet$};
\node [inner sep=0.1em,label=right:{$0$}] (Oa'0) at ($ (O) + (r) + (u) $) {$\bullet$};
\node [inner sep=0.1em,label=right:{$1$}] (Oa'1) at ($ (O) + (r) + (d) $) {$\bullet$};
\draw [->] (R) -- (Ra) node [pos=0.5,above] {$a$};
\draw [->] (R) -- (Ra') node [pos=0.5,above] {$a'$};
\draw [<-|] (O) -- (Oa0);
\draw [<-|] (O) -- (Oa1);
\draw [<-|] (O) -- (Oa'0);
\draw [<-|] (O) -- (Oa'1);
\draw [dotted] (R) -- (O);
\draw [dotted] (Ra) -- (Oa0);
\draw [dotted] (Ra') -- (Oa'0);
\node [inner sep=0.1em] (L) at (-1.875,0) {$L$};
\node [inner sep=0.1em] (S) at (-1.875,1) {$\S$};
\node [inner sep=0.1em] (Sx) at ($ (O) + (0,0.625) $) {$\S(x)$};
\node [inner sep=0.1em] (Sy) at ($ (Sx) + (l) $) {$\S(y)$};
\node [inner sep=0.1em] (Sz) at ($ (Sx) + (r) $) {$\S(z)$};
\draw [->] (Sy) -- (Sx) node [pos=0.5,above] {$\S(a)$};
\draw [->] (Sz) -- (Sx) node [pos=0.5,above] {$\S(a')$};
% \draw [->] (S) -- (L) node [pos=0.5,left] {$\pi_A$};
\end{tikzpicture}
% \hfil%
&&
% $
s \rightsquigarrow \, \vec{P} = \left(
\begin{array}{c}
P(0 \mid a) \\
P(1 \mid a) \\ \hline
P(0 \mid a') \\
P(1 \mid a')
\end{array}
\right)
% $
% \end{figure}
% \end{wrapfigure}
\end{align*}
The binary branching in $L$ represents the choice Alice makes outside the system, choosing from two measurements $a$ and $a'$.
Then regard $\S$ as a transition system, reading ``$\mapsto$'' backward as transition ``$\leftarrow$'';
each such edge of transition in $\S$ is labelled with an edge in $L$---%
for instance, those in $\S(a)$ above are labelled with $a$, representing possible outcomes the system has for Alice's choice of $a$.
So the binary branching in $\S(a)$ represents the system having two outcomes for measurement $a$.
One of our objectives is to assign probabilities to such branchings, so that, in the picture above for instance, the state $s$ can be (at least partially) specified by the vector of probabilities $\vec{P}$ to the right of the picture above.

Note that the representation just given involves two kinds of choice.
Put in general terms,
% the slogan for our use of presheaves over trees is this:
when we describe a system and agents external to the system,
\begin{itemize}
\item
% When we describe a system and agents external to the system, these external agents may make \emph{external choices};
% we represent them with branching in the base tree.
The agents may be able to choose from different ways to interfere or interact with the system.
% Alice may choose measurement $a$ over $a'$;
% or she may push one of several buttons on a coffee machine.
We call these choices \emph{external choices}, and represent them with branching in the base tree.
\item
% The system may make \emph{internal choices}, sometimes in response to external choices but sometimes just as time passes;
% we represent these choices with branching in function components of the presheaf.
The system may behave by itself non-deterministically---%
sometimes in response to external choices, but sometimes simply as time passes---%
with several possible outcomes.
We call these choices \emph{internal choices}, and represent them with branching in function components of the presheaf.
\end{itemize}
In short, external choice resides in the base tree $L$;
internal choice resides in (function components of) the presheaf $\S$.
This is the slogan for our use of presheaves $\S$ over trees $L$ as $L$-labelled transition systems.

In fact, not just the distinction between internal and external choices, the presheaf structure also gives us several useful ways to control descriptions of these choices---%
for instance, to shift the boundary between the internal and external.
We will see this in \autoref{sec:control.description}.
Before doing so, it is useful to observe that the presheaves over a tree are equivalent to the \emph{fibrations} over the tree (which should be quite obvious from the picture above).
Let us recall

\begin{definition}
A bundle (i.e., monotone map) $\pi : \S \to L$ of posets is called a \emph{fibration} (over $L$) if, whenever $x \leqslant_L \pi(t)$, there is a unique $s \in \pi^{-1}(x)$ such that $s \leqslant_\S t$.
Write $\Fib$ for the category of posets and fibrations.
\end{definition}

We should note that if $\pi : \S \to L$ is a fibration and $L$ is a tree then $\S$ is also a tree.
Then it is easy to show the following (we provide a proof rather as a review of notation).

\begin{fact}\label{thm:presheaf.fibration.equiv}
$\Sets^{L^\op} \cateq \Fib / L$ for any poset $L$.
\end{fact}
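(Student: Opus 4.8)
The plan is to exhibit the equivalence via the Grothendieck construction (the category of elements) and its inverse. First I would define a functor $\Gamma : \Sets^{L^\op} \to \Fib/L$ sending a presheaf $F$ to the poset $\Gamma F$ whose underlying set is $\{(x,s) : x \in L,\ s \in F(x)\}$, ordered by $(x,s) \leqslant (x',s')$ iff $x \leqslant_L x'$ and $F(x \leqslant x')(s') = s$, together with the first projection $\pi_F : \Gamma F \to L$. One checks that $\leqslant$ is genuinely a partial order (antisymmetry uses that of $L$ together with functoriality of $F$) and that $\pi_F$ is a fibration: given $x \leqslant_L \pi_F(x',s') = x'$, the unique $s \in \pi_F^{-1}(x)$ with $(x,s) \leqslant (x',s')$ is forced to be $F(x \leqslant x')(s')$. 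On arrows, a natural transformation $\alpha : F \Rightarrow G$ is sent to $(x,s) \mapsto (x, \alpha_x(s))$, which is monotone precisely by naturality of $\alpha$ and lies over $L$.

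For the inverse, I would define $\Phi : \Fib/L \to \Sets^{L^\op}$ sending a fibration $\pi : \S \to L$ to the presheaf $x \mapsto \pi^{-1}(x)$, whose action on a comparison $x \leqslant_L x'$ takes $s' \in \pi^{-1}(x')$ to the unique $s \in \pi^{-1}(x)$ with $s \leqslant_\S s'$ supplied by the fibration property. Functoriality of this presheaf (preservation of identities and of composites) follows immediately from the uniqueness clause in the definition of fibration. A morphism $f : \S \to \S'$ over $L$ restricts, for each $x$, to a map $\pi^{-1}(x) \to (\pi')^{-1}(x)$, and naturality of the resulting family is again forced by uniqueness of lifts, since $f$ is monotone and commutes with the projections.

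Finally I would check that $\Gamma$ and $\Phi$ are mutually inverse up to natural isomorphism. For $\Phi \circ \Gamma$, the presheaf $x \mapsto \pi_F^{-1}(x) = \{(x,s) : s \in F(x)\}$ is naturally isomorphic to $F$ via $(x,s) \mapsto s$. For $\Gamma \circ \Phi$, the poset $\Gamma\Phi(\pi)$ has elements $(x,s)$ with $s \in \pi^{-1}(x)$, which biject with $\S$ via $(x,s) \mapsto s$; this bijection is an order-isomorphism over $L$ because $(x,s) \leqslant (x',s')$ holds iff $x \leqslant_L x'$ and $s$ is the (unique) lift of $s'$ to $\pi^{-1}(x)$, which by the fibration property is equivalent to $s \leqslant_\S s'$. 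Both isomorphisms are readily seen to be natural.

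The bookkeeping here is routine; the single point that deserves care — and where essentially all the content sits — is the interplay between the contravariance of $F$ and the uniqueness clause for fibrations: it is precisely this uniqueness that makes $\Gamma F$ a poset rather than a mere preorder, that yields functoriality of $\Phi(\pi)$, and that forces morphisms over $L$ to respect the chosen lifts. I do not anticipate a genuine obstacle beyond keeping these directions straight.
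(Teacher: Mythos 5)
Your proposal is correct and follows essentially the same route as the paper: both construct the fibration from a presheaf via the dependent sum $\sum_{x \in L} \S(x)$ with the order $(x,s) \leqslant (y,t)$ iff $x \leqslant_L y$ and $s = \S(x,y)(t)$, recover the presheaf from a fibration via fibers and unique lifts, and identify natural transformations with monotone maps over $L$. Your write-up is somewhat more explicit about the inverse-up-to-isomorphism checks, but there is no substantive difference in approach.
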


\begin{proof}
A presheaf $\S : L^\op \to \Sets$ yields a fibration with the projection $\pi : \S \to L$ from the dependent sum
%
%
% \begin{align*}
% \S := \sum_{x \in L} \S(x) = \{\, (x, s) \mid x \in L \text{ and } s \in \S(x) \,\}
% \end{align*}
$\S := \sum_{x \in L} \S(x) = \{\, (x, s) \mid x \in L \text{ and } s \in \S(x) \,\}$
to $L$ and the order $\leqslant_\S$ on $\S$ such that $(x, s) \leqslant_\S (y, t)$ iff $x \leqslant_L y$ and $s = \S(x, y)(t)$.
A fibration $\pi : \S \to L$ yields a presheaf $\S : L^\op \to \Sets$ by letting $\S(x) = \pi^{-1}(x)$ for $x \in L$ and, whenever $x \leqslant_L y$, defining $\S(x, y) : \pi^{-1}(y) \to \pi^{-1}(x)$ so that $\S(x, y)(t)$ for $t \in \pi^{-1}(y)$ is the unique $s \in \pi^{-1}(x)$ such that $s \leqslant_\S t$.

Given presheaves $\S, T : L^\op \to \Sets$ and corresponding fibrations $\pi_\S : \S \to L$, $\pi_T : T \to L$, the natural transformations from $\S$ to $T$ are just the monotone maps $f : \S \to T$ over $L$ (meaning $\pi_T \cmp f = \pi_\S$), but any such monotone map $f$ can easily be shown to be a fibration.
\end{proof}

We will invoke this presheaf-fibration equivalence extensively in the rest of this paper.

\subsection{Controlling System Descriptions}\label{sec:control.description}

Given presheaf-fibration descriptions of non-deterministic processes with internal and external choices, we can take further advantage of operations on the presheaf-fibration structure to control the descriptions.

A family of operations that will later prove useful is done by change of base.
One such operation is to precompose a given presheaf $\S : {L_1}^\op \to \Sets$ with an embedding $m : L_0 \rightarrowtail L_1$, obtaining a new presheaf $\S \cmp m^\op : {L_0}^\op \to \Sets$.
Since some points, or ``stages'', of $L_1$ are ``omitted'' in $L_0$, the precomposition makes the model ``forget'' what takes place at these omitted stages.
For instance, take $m : L_0 \rightarrowtail L_1$ as on the left of \eqref{eq:change.of.base} below, and let $a$ and $b$ represent measurements by Alice and by Bob.
Then a presheaf $\S : {L_1}^\op \to \Sets$ carries information as to the original states (in $\S(x)$), the possible outcomes of $a$ (in $\S(y)$), and then the possible further outcomes of $b$ (in $\S(z)$).
In contrast, the presheaf $\S \cmp m^\op : {L_0}^\op \to \Sets$ carries the same information as to the original states (in $\S(x)$) and the outcomes of both measurements (in $\S(z)$), but it has no information as to the process in between (or, indeed, even as to whether $a$ is performed before, after, or at the same time as $b$).
\begin{align}
\label{eq:change.of.base}
\begin{tikzpicture}[x=40pt,y=40pt,thick,label distance=-0.25em,baseline=(current bounding box.center)]
\node (i) at (1,0) {};
\node (j) at (1.5,0) {};
\node (l) at (-0.25,0.125) {};
\node (r) at (0.25,-0.125) {};
\node (u) at (0,0.175) {};
\node (d) at (0,-0.175) {};
\node [inner sep=0.1em,label={[label distance=-0.5em]235:{$x$}}] (KAO) at (0,0) {$\circ$};
\node [inner sep=0.1em,label=below:{$y$}] (KAOa) at ($ (KAO) + (i) $) {$\circ$};
\node [inner sep=0.1em,label={[label distance=-0.5em]315:{$z$}}] (KAOab) at ($ (KAOa) + (i) $) {$\circ$};
\draw [->] (KAO) -- (KAOa) node [pos=0.5,below] {$a$};
\draw [->] (KAOa) -- (KAOab) node [pos=0.5,below] {$b$};
\node [inner sep=0.1em] (TAO) at ($ (KAO) + (0,1.25) $) {$\bullet$};
\node [inner sep=0.1em,label=above:{$0$}] (TAOa0) at ($ (TAO) + (i) + (u) + (u) $) {$\bullet$};
\node [inner sep=0.1em,label={[label distance=-0.5em]235:{$1$}}] (TAOa1) at ($ (TAO) + (i) + (d) + (d) $) {$\bullet$};
\node [inner sep=0.1em,label=right:{$00$}] (TAOa0b0) at ($ (TAOa0) + (i) + (u) $) {$\bullet$};
\node [inner sep=0.1em,label=right:{$01$}] (TAOa0b1) at ($ (TAOa0) + (i) + (d) $) {$\bullet$};
\node [inner sep=0.1em,label=right:{$10$}] (TAOa1b0) at ($ (TAOa1) + (i) + (u) $) {$\bullet$};
\node [inner sep=0.1em,label=right:{$11$}] (TAOa1b1) at ($ (TAOa1) + (i) + (d) $) {$\bullet$};
\draw [->] (TAO) -- (TAOa0);
\draw [->] (TAO) -- (TAOa1);
\draw [->] (TAOa0) -- (TAOa0b0);
\draw [->] (TAOa0) -- (TAOa0b1);
\draw [->] (TAOa1) -- (TAOa1b0);
\draw [->] (TAOa1) -- (TAOa1b1);
% \draw [dotted] (KAO) -- (TAO);
\draw [dotted] (KAOa) -- (TAOa0);
% \draw [dotted] (KAOab) -- (TAOa0b0);
\node [inner sep=0.2em] (KA) at ($ (KAO) + (-0.65,0) $) {$L_1$};
\node [inner sep=0.2em] (TA) at ($ (KA) + (0,1.25) $) {$\Sets$};
\draw [<-] (TA) -- (KA) node [pos=0.5,left] {$\S$};
\node [inner sep=0.1em] (K0O) at ($ (KAO) + (0,-0.75) $) {$\circ$};
\node [inner sep=0.1em] (K0Oi) at ($ (K0O) + (i) + (i) $) {$\circ$};
\draw [->] (K0O) -- (K0Oi) node [pos=0.5,below] {$ab$};
% \draw [dotted] (K0O) -- (KAO);
% \draw [dotted] (K0O) -- (TAO);
\draw [dotted] (K0O) -- (KAO) -- (TAO);
% \draw [dotted] (K0Oi) -- (KAOab);
% \draw [dotted] (K0Oi) -- (TAOa0b0);
\draw [dotted] (K0Oi) -- (KAOab) -- (TAOa0b0);
\node [inner sep=0.2em] (K0A) at ($ (KA) + (0,-0.75) $) {$L_0$};
\draw [<-<] (KA) -- (K0A) node [pos=0.5,left] {$m$};
\node (uu) at (0,0.25) {};
\node (dd) at (0,-0.25) {};
\node [inner sep=0.1em] (LAO) at (4.75,0) {$\circ$};
\node [inner sep=0.1em] (LAOa) at ($ (LAO) + (j) + (l) $) {$\circ$};
\node [inner sep=0.1em] (LAOa') at ($ (LAO) + (j) + (r) $) {$\circ$};
\draw [->] (LAO) -- (LAOa) node [pos=0.5,above] {$a$};
\draw [->] (LAO) -- (LAOa') node [pos=0.5,below] {$a'$};
\node [inner sep=0.1em] (SAO) at ($ (LAO) + (0,1.25) $) {$\bullet$};
\node [inner sep=0.1em,label=right:{$0$}] (SAOa0) at ($ (SAO) + (j) + (l) + (uu) $) {$\bullet$};
\node [inner sep=0.1em,label=right:{$1$}] (SAOa1) at ($ (SAO) + (j) + (l) + (dd) $) {$\bullet$};
\node [inner sep=0.1em,label=right:{$0$}] (SAOa'0) at ($ (SAO) + (j) + (r) + (uu) $) {$\bullet$};
\node [inner sep=0.1em,label=right:{$1$}] (SAOa'1) at ($ (SAO) + (j) + (r) + (dd) $) {$\bullet$};
\draw [->] (SAO) -- (SAOa0);
\draw [->] (SAO) -- (SAOa1);
\draw [->] (SAO) -- (SAOa'0);
\draw [->] (SAO) -- (SAOa'1);
% \draw [dotted] (LAO) -- (SAO);
% \draw [dotted] (LAOa) -- (SAOa0);
% \draw [dotted] (LAOa') -- (SAOa'0);
\node [inner sep=0.2em] (LA) at ($ (LAO) + (-0.5,0) $) {$L_A$};
\node [inner sep=0.2em] (SA) at ($ (LA) + (0,1.25) $) {$\S_A$};
\draw [->] (SA) -- (LA) node [pos=0.5,left] {$\pi_A$};
\node [inner sep=0.1em] (L0O) at ($ (LAO) + (0,-0.75) $) {$\circ$};
\node [inner sep=0.1em] (L0Oi) at ($ (L0O) + (j) $) {$\circ$};
\draw [->] (L0O) -- (L0Oi);
% \draw [dotted] (L0O) -- (LAO);
\draw [dotted] (L0O) -- (LAO) -- (SAO);
% \draw [dotted] (L0Oi) -- (LAOa);
\draw [dotted] (L0Oi) -- (LAOa) -- (SAOa0);
% \draw [dotted] (L0Oi) -- (LAOa');
\draw [dotted] (L0Oi) -- (LAOa') -- (SAOa'0);
\node [inner sep=0.2em] (L0A) at ($ (LA) + (0,-0.75) $) {$L_\varnothing$};
\draw [->] (LA) -- (L0A) node [pos=0.5,left] {$p_A$};
\end{tikzpicture}
\end{align}

Another is to compose fibrations $\pi : \S \to L_0$ and $p : L_0 \to L_1$, obtaining a new fibration $p \cmp \pi : \S \to L_1$.
In $\pi$, branchings in $L_0$ represent external choices, but some of them are internal choices in $p$;
so the composition ``internalizes'' these external choices.
Take $\pi_A$ and $p_A$ as on the right of \eqref{eq:change.of.base} above.
$\pi_A$ describes Alice as an agent external to a system who externally chooses from measurements $a$ and $a'$.
On the other hand, $p_A \cmp \pi_A$ describes a bigger system encompassing Alice---%
so that we simply watch the bigger system internally choose from the four outcomes, ``Alice performs $a$ and gets outcome $0$'', etc.

In fact, such composition of fibrations can be used to compose descriptions of several systems into a description of a multipartite system.
The fibration $\pi_A$ in the picture above describes a $(1, 2, 2)$-scenario for Alice.
Take an isomorphic $\pi_B : \S_B \to L_B$ to describe a $(1, 2, 2)$-scenario for Bob.
Then a fibration $\pi_{AB} : \S_{AB} \to L_{AB}$ for the composed $(2, 2, 2)$-scenario is obtained as follows:
\begin{align}
\label{eq:multipartite}
\text{
\begin{tabular}[c]{@{}c@{}}
\xymatrix@!0@C+1.5pc@R+0.75pc{
\S_{AB}
\pb[2,2]
\ar[2,0]
\ar[0,2]
\ar@{.>}[1,1]^(.65)*{\pi_{AB}}
&&
\S_B
\ar[1,0]^-*{\pi_B}
\\
&
L_{AB}
\pb[2,2]
\ar[1,0]
\ar[0,1]
&
L_B
\ar[1,0]^-*{p_B}
\\
\S_A
\ar[0,1]_-*{\pi_A}
&
L_A
\ar[0,1]_-*{p_A}
&
L_\varnothing
}
\end{tabular}
}
&&
\begin{tikzpicture}[x=40pt,y=40pt,thick,label distance=-0.25em,baseline=(current bounding box.center)]
\node [inner sep=0.1em] (R) at (0,0) {$\circ$};
\node [inner sep=0.1em] (Rab) at (-1.75,0.25) {$\circ$};
\node [inner sep=0.1em] (Rab') at (-1.25,-0.25) {$\circ$};
\node [inner sep=0.1em] (Ra'b) at (1.25,0.25) {$\circ$};
\node [inner sep=0.1em] (Ra'b') at (1.75,-0.25) {$\circ$};
\node [inner sep=0.1em,label=above:{$s$}] (O) at ($ (R) + (0,1.5) $) {$\bullet$};
\node [inner sep=0.1em] (Oab) at ($ (Rab) + (0,1.5) $) {};
\node [inner sep=0.1em,label=left:{$00% ab
	$}] (Oa0b0) at ($ (Oab) + (0,0.375) $) {$\bullet$};
\node [inner sep=0.1em,label=left:{$01% a\bar{b}
	$}] (Oa0b1) at ($ (Oab) + (0,0.125) $) {$\bullet$};
\node [inner sep=0.1em,label=left:{$10$}] (Oa1b0) at ($ (Oab) + (0,-0.125) $) {$\bullet$};
\node [inner sep=0.1em,label=left:{$11$}] (Oa1b1) at ($ (Oab) + (0,-0.375) $) {$\bullet$};
\node [inner sep=0.1em] (Oab') at ($ (Rab') + (0,1.5) $) {};
\node [inner sep=0.1em] (Oa0b'0) at ($ (Oab') + (0,0.375) $) {$\bullet$};
\node [inner sep=0.1em] (Oa0b'1) at ($ (Oab') + (0,0.125) $) {$\bullet$};
\node [inner sep=0.1em,label=left:{$10% \bar{a}b'
	$}] (Oa1b'0) at ($ (Oab') + (0,-0.125) $) {$\bullet$};
\node [inner sep=0.1em,label=left:{$11% \bar{a}\bar{b'}
	$}] (Oa1b'1) at ($ (Oab') + (0,-0.375) $) {$\bullet$};
\node [inner sep=0.1em] (Oa'b) at ($ (Ra'b) + (0,1.5) $) {};
\node [inner sep=0.1em,label=right:{$00% a'b
	$}] (Oa'0b0) at ($ (Oa'b) + (0,0.375) $) {$\bullet$};
\node [inner sep=0.1em,label=right:{$01% a'\bar{b}
	$}] (Oa'0b1) at ($ (Oa'b) + (0,0.125) $) {$\bullet$};
\node [inner sep=0.1em] (Oa'1b0) at ($ (Oa'b) + (0,-0.125) $) {$\bullet$};
\node [inner sep=0.1em] (Oa'1b1) at ($ (Oa'b) + (0,-0.375) $) {$\bullet$};
\node [inner sep=0.1em] (Oa'b') at ($ (Ra'b') + (0,1.5) $) {};
\node [inner sep=0.1em,label=right:{$00$}] (Oa'0b'0) at ($ (Oa'b') + (0,0.375) $) {$\bullet$};
\node [inner sep=0.1em,label=right:{$01$}] (Oa'0b'1) at ($ (Oa'b') + (0,0.125) $) {$\bullet$};
\node [inner sep=0.1em,label=right:{$10% \bar{a'}b
	$}] (Oa'1b'0) at ($ (Oa'b') + (0,-0.125) $) {$\bullet$};
\node [inner sep=0.1em,label=right:{$11% \bar{a'}\bar{b'}
	$}] (Oa'1b'1) at ($ (Oa'b') + (0,-0.375) $) {$\bullet$};
\draw [->] (R) -- (Rab) node [pos=0.5,above] {$ab$};
\draw [->] (R) -- (Rab') node [pos=0.5,below] {$ab'$};
\draw [->] (R) -- (Ra'b) node [pos=0.5,above] {$a'b$};
\draw [->] (R) -- (Ra'b') node [pos=0.5,below] {$a'b'$};
\draw [->] (O) -- (Oa0b0);
\draw [->] (O) -- (Oa0b1);
\draw [->] (O) -- (Oa1b0);
\draw [->] (O) -- (Oa1b1);
\draw [->] (O) -- (Oa0b'0);
\draw [->] (O) -- (Oa0b'1);
\draw [->] (O) -- (Oa1b'0);
\draw [->] (O) -- (Oa1b'1);
\draw [->] (O) -- (Oa'0b0);
\draw [->] (O) -- (Oa'0b1);
\draw [->] (O) -- (Oa'1b0);
\draw [->] (O) -- (Oa'1b1);
\draw [->] (O) -- (Oa'0b'0);
\draw [->] (O) -- (Oa'0b'1);
\draw [->] (O) -- (Oa'1b'0);
\draw [->] (O) -- (Oa'1b'1);
\draw [dotted] (R) -- (O);
\draw [dotted] (Rab) -- (Oa0b0);
\draw [dotted] (Rab') -- (Oa0b'0);
\draw [dotted] (Ra'b) -- (Oa'0b0);
\draw [dotted] (Ra'b') -- (Oa'0b'0);
\node [inner sep=0.2em] (L) at (-2.625,0) {$L_{AB}$};
\node [inner sep=0.2em] (S) at (-2.625,1.5) {$\S_{AB}$};
\draw [->] (S) -- (L) node [pos=0.5,left] {$\pi_{AB}$};
\end{tikzpicture}
\end{align}
That is, $\pi_{AB} = \pi_A \times_{L_\varnothing} \pi_B : \S_A \times_{L_\varnothing} \S_B \to L_A \times_{L_\varnothing} L_B$.
Put more conceptually, we use $L_\varnothing$ as a clock for synchronizing events in Alice's scenario and ones in Bob's, and then take simultaneous pairs of events from Alice's and Bob's scenarios.
We should note that the pair of projections from $\S_{AB}$ and $L_{AB}$ to $\S_A$ and $L_A$ represents the restriction of a description of what is globally the case in the bipartite system to a description of what is locally the case in Alice's subsystem---%
this is a tool crucial for the purpose of this paper, of capturing non-locality and contextuality.
We will see, for instance, that this projection has a role in characterizing the no-signalling property in fibrational terms in \autoref{sec:probability.presheaf}.

It may need stressing that $\S_{AB}$ described above is just a cartesian product (taken fiberwise over $L_\varnothing$)---%
rather than anything similar to a tensor product---%
of $\S_A$ and $\S_B$;
hence it does not by itself express any correlation between Alice's and Bob's measurement outcomes.
It is rather a transition-system expression for the $4 \times 4$ entries in a probability table describing a $(2, 2, 2)$-scenario.
Any correlation will be expressed by assigning probabilities to transitions in $\S_{AB}$;
we will see how in \autoref{sec:probability.presheaf}.

\section{Adding Probabilities to Presheaves}\label{sec:probability.presheaf}

This short section lays out how to add probabilities to the presheaf representation of non-deterministic processes given in \autoref{sec:presheaf}.
The definitions provided here will later be generalized in \autoref{sec:probability.relational.presheaf}, after a generalization of the presheaf representation is proposed in \autoref{sec:relational.presheaf}.

\subsection{Stochastic Presheaves}\label{sec:stochastic.presheaf}

Recall that in a description of a non-deterministic process with a presheaf $\S : L^\op \to \Sets$, for any edge $e = (x, y)$ of $L$ and state $s \in \S(x)$, the inverse image $\S(e)^{-1}(s) \subseteq \S(y)$ is the set of states to which the system may internally choose to transition from $s$ when $e$ is externally chosen.
Now we want to give probability to such an internal choice;
so let us achieve just that, with the following series of definitions.
They use the notion of $R$-distribution for a commutative semiring $R$;
see \cite[\Sect 2.3]{abr11} for its definition.
In particular, throughout this paper all distributions are assumed to be normalized and with finite support.

\begin{definition}
Fix a commutative semiring $R$.
Given any sets $X$ and $Y$, we define an \emph{$R$-map} from $X$ to $Y$ as any surjection $f : Y \twoheadrightarrow X$ (note the opposite direction) equipped with, for each $s \in X$, an $R$-distribution on $f^{-1}(s) \subseteq Y$, written $d^f_s$.
(We say that an $R$-map is \emph{on} its underlying surjection.)
\end{definition}

Obviously, we can achieve what we wanted above with an $\RRp$-map $f$ on $\S(e) : \S(y) \to \S(x)$ (assuming $\S(e)$ is surjective):
The distribution $d^f_s$ assigns to each $t \in \S(e)^{-1}(s)$ the probability $d^f_s(t)$ with which the system transitions from $s$ to $t$ (when $e$ is chosen).
To do this for the entire presheaf, we give

\begin{definition}\label{def:R.map.cat}
Given any two $R$-maps $f^R$ on $f : Z \twoheadrightarrow Y$ and $g^R$ on $g : Y \twoheadrightarrow X$, let their composition $f^R \cmp g^R$ be on $g \cmp f : Z \twoheadrightarrow X$ with, for each $s \in X$, an $R$-distribution $d^{g \cmp f}_s$ on $f^{-1}(g^{-1}(s)) \subseteq Z$ such that
\begin{gather}\label{eq:R.map.composition}
d^{g \cmp f}_s(u) = d^g_s(f(u)) \cdot d^f_{f(u)}(u) .
\end{gather}
Write $R\Map$ for the category of sets and $R$-maps.
(Clearly, the unique $R$-map on the identity map $1_X : X \to X$ is the identity on $X$ in $R\Map$.)
\end{definition}

The point of \eqref{eq:R.map.composition} should be clear:
When $s = g(t)$ and $t = f(u)$, the system transitions from $s$ to $t$ with probability $r = d^g_s(t)$ and from $t$ to $u$ with probability $r' = d^f_t(u)$;
so it transitions from $s$ to $u$ with probability $r \cdot r' = d^g_s(t) \cdot d^f_t(u) = d^{g \cmp f}_s(u)$.
(Note that the system can go from $s$ to $u$ through at most one $t$, since $f$ is a function.)
Then, finally,

\begin{definition}
An \emph{$R$-presheaf} over a category $\mathbf{C}$ is a contravariant functor from $\mathbf{C}$ to $R\Map$.
(We say that an $R$-presheaf is \emph{on} its underlying presheaf.)
\end{definition}

So, given a presheaf $\S : L^\op \to \Sets$ over a tree $L$ as an $L$-LTS, we assign probabilities to the internal choices in $\S$ by simply taking an $R$-presheaf on $\S$.

The presheaf-fibration equivalence (\autoref{thm:presheaf.fibration.equiv}) partially extends to $R$-presheaves:
We can define ``$R$-fibrations'' and prove that the equivalence extends to an essentially surjective and full functor from the category of rooted $R$-presheaves over a rooted tree $L$ to that of rooted $R$-fibrations over $L$ (we however omit the definitions and proof in this abstract).
This extended version is limited and no longer an equivalence, but good enough for practical purposes.
The core idea is that, given an $R$-presheaf $\S : L^\op \to \Rel$ that has a root $s \in \S(x)$, the ``horizontal'' assignment of probabilities $d^{\S(x, y)}_s(t)$ to all states $t \in \S$ can be turned into a ``vertical'' assignment of probabilities $d^{\pi}_y(t)$ on the fibration $\pi : \S \to L$ that corresponds to the underlying presheaf of $\S$.

Lastly, note that, although it may be proper to reserve the term ``probability'' to values of $\RRp$-distributions, in this paper I apply the term broadly to values of $R$-distributions in general.
Other interesting cases of $R$ include $\BB$, the booleans, and $\RR$, all the reals, both of which are discussed in \cite{abr11}.

\subsection{Example: No-Signalling}\label{sec:no.signalling}

Let us say that a commutative semiring $R$ is ``normalizable'' if, for every family $\{ r_i \}_{i \in I}$ of elements of $R$ such that $c := \sum_{i \in I} r_i \neq 0$, there is a family $\{ r'_i \}_{i \in I}$ of elements of $R$ such that $r'_j \cdot c = r_j$ for each $j \in I$ and
%
%
% \begin{align*}
\begin{wrapfigure}{r}{0pt}
\xymatrix@!0@C-0.875pc@R-1.125pc{
Z
\ar[6,2]_-*{h}
\ar[0,4]^-*{f}
&&&&
Y
\ar[6,-2]^-*{g}
\\\\
&&
\mathrotate{36}{=}
\\\\\\\\
&&
X
}
% \end{align*}
\end{wrapfigure}
$\sum_{i \in I} r'_i = 1$.
For instance, $\RRp$ is normalizable.
Now, in $R\Map$ for normalizable $R$, we have the following fact (a proof is omitted since it is straightforward).

\begin{fact}\label{thm:marginal}
Suppose $R$ is normalizable.
Then a factorization of a surjection into surjections, $h = g \cmp f : Z \twoheadrightarrow Y \twoheadrightarrow X$, induces the following function $\phi$:
For any $R$-map $h^R$ on $h$, $\phi(h^R)$ is the (unique) $R$-map on $g$ through which $h^R$ factors (in $R\Map$);
that is, $h^R = f^R \cmp \phi(h^R)$ for some $R$-map $f^R$ on $f$.
More explicitly, $\phi(h^R)$ is defined by
%
%
% \begin{align*}
% d^{\phi(h^R)}_s(t) = \sum_{u \in f^{-1}(t)} d^{h^R}_s(u)
% \end{align*}
$d^{\phi(h^R)}_s(t) = \sum_{u \in f^{-1}(t)} d^{h^R}_s(u)$
for $s \in X$ and $t \in Y$;
in other words, $\phi(h^R)$ is the \emph{marginal} of $h^R$ along the identification of states $u \in Z$ by the quotient map $f : Z \twoheadrightarrow Y$.
In addition, $\phi$ is a surjection from the $R$-maps on $h$ to those on $g$.
\end{fact}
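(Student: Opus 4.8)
The plan is to establish, in turn, that $d^{\phi(h^R)}$ is a well-defined $R$-map on $g$, that $h^R$ factors through it, that it is the unique $R$-map on $g$ with this property, and that $\phi$ is surjective onto the $R$-maps on $g$. The one fact underlying everything is that, since $f$ is a \emph{function}, for each $s \in X$ the fibres $f^{-1}(t)$ with $t \in g^{-1}(s)$ partition $h^{-1}(s) = f^{-1}(g^{-1}(s))$. Using this, the well-definedness is immediate: the support of $d^{\phi(h^R)}_s$ is contained in the image under $f$ of the finite support of $d^{h^R}_s$, hence is finite, and $\sum_{t \in g^{-1}(s)} d^{\phi(h^R)}_s(t) = \sum_{u \in h^{-1}(s)} d^{h^R}_s(u) = 1$, so each $d^{\phi(h^R)}_s$ is a genuine $R$-distribution on $g^{-1}(s)$.

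Next I would produce the factorisation $h^R = f^R \cmp \phi(h^R)$ by defining the $R$-map $f^R$ on $f$ fibrewise. Fix $t \in Y$, put $s = g(t)$ and $c := d^{\phi(h^R)}_s(t) = \sum_{u \in f^{-1}(t)} d^{h^R}_s(u)$. When $c \neq 0$, normalisability applied to the (finitely supported) family $\{ d^{h^R}_s(u) \}_{u \in f^{-1}(t)}$ supplies a family $\{ d^{f^R}_t(u) \}_{u}$ with $d^{f^R}_t(u) \cdot c = d^{h^R}_s(u)$ and $\sum_u d^{f^R}_t(u) = 1$ --- the desired conditional distribution on $f^{-1}(t)$; when $c = 0$, put any $R$-distribution (a point mass, say) on the nonempty fibre $f^{-1}(t)$. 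Substituting this $f^R$ and $\phi(h^R)$ into \eqref{eq:R.map.composition} gives $d^{f^R \cmp \phi(h^R)}_s(u) = d^{\phi(h^R)}_s(f(u)) \cdot d^{f^R}_{f(u)}(u) = d^{h^R}_s(u)$ as required; on the $c = 0$ fibres the left-hand side is $0$, which equals $d^{h^R}_s(u)$ as long as $R$ is zero-sum free (i.e.\ $\sum_i r_i = 0$ forces each $r_i = 0$), as holds for $\RRp$ and $\BB$.

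Uniqueness and surjectivity are then bookkeeping with \eqref{eq:R.map.composition}. If $h^R = (f')^R \cmp (g')^R$ with $(g')^R$ on $g$ and $(f')^R$ on $f$, then $d^{h^R}_s(u) = d^{(g')^R}_s(f(u)) \cdot d^{(f')^R}_{f(u)}(u)$; summing over $u \in f^{-1}(t)$ and using normalisation of $d^{(f')^R}_t$ gives $d^{(g')^R}_s(t) = \sum_{u \in f^{-1}(t)} d^{h^R}_s(u) = d^{\phi(h^R)}_s(t)$, so $(g')^R = \phi(h^R)$. For surjectivity, given an arbitrary $R$-map $g^R$ on $g$, fix any $R$-map $f^R$ on $f$ and set $h^R := f^R \cmp g^R$; the same marginal computation shows $\phi(h^R) = g^R$. (This in fact exhibits $\phi$ as a split surjection, a section being $g^R \mapsto f^R \cmp g^R$ for a fixed choice of $f^R$ on $f$.)

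The only step that is more than routine is the fibrewise construction of $f^R$: this is exactly where normalisability is needed, in order to invert multiplication by the marginal weight $c$. The delicate point to get right is the behaviour on fibres with $c = 0$, where one must invoke zero-sum-freeness of $R$ to see the factorisation identity is not violated; one should also check that the families supplied by normalisability can be chosen with finite support, so that the $d^{f^R}_t$ are bona fide distributions. Everything else reduces to the partition of $h^{-1}(s)$ by the fibres of $f$ together with distributivity of $\cdot$ over finite $\sum$ in $R$.
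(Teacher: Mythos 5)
The paper does not actually supply a proof here---it states that ``a proof is omitted since it is straightforward''---so there is nothing to match your argument against; what can be said is that your proof is correct and is surely the intended one. The partition of $h^{-1}(s)$ by the fibres $f^{-1}(t)$, $t\in g^{-1}(s)$, gives well-definedness of the marginal; the uniqueness and surjectivity computations are exactly the right bookkeeping with \eqref{eq:R.map.composition} (and your remark that $g^R\mapsto f^R\cmp g^R$ splits $\phi$ is a nice extra); and the fibrewise use of normalizability to build the conditional distribution $d^{f^R}_t$ is precisely where the hypothesis on $R$ enters. The one substantive point you raise---the fibres with marginal weight $c=0$---is a genuine subtlety that the paper glosses over: normalizability says nothing when $c=0$, and since uniqueness forces any factorization to go through the marginal, the existence claim really does need every zero sum $\sum_{u\in f^{-1}(t)}d^{h^R}_s(u)=0$ to have all terms zero. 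This holds for $\RRp$ and $\BB$, but not for the semiring $\RR$ of all reals, which is normalizable and is explicitly mentioned in the paper as a case of interest; there the Fact as stated can fail (take a fibre carrying weights $1$ and $-1$). So your zero-sum-freeness caveat is not a defect of your proof but a hypothesis the statement tacitly assumes, and flagging it is the most valuable part of your write-up. The remaining loose end you mention---that the normalized family can be taken with finite support---is handled by restricting to the finite support of $d^{h^R}_s$ and setting the remaining $r'_i$ to $0$.
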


% \begin{align*}
\begin{wrapfigure}{r}{0pt}
\xymatrix@!0@C-0.875pc@R-1.125pc{
\S_{AB}
\ar[3,1]_-*{\pi_{AB}}
\ar[0,4]^-*{p_\S}
&&&&
\S_A
\ar[6,-2]^-*{\pi_A}
\\\\
&&
\mathrotate{36}{=}
\\
&
L_{AB}
\ar[3,1]_-*{p_L}
\\\\\\
&&
L_A
}
% \end{align*}
\end{wrapfigure}
Let us apply this fact to the diagram in \eqref{eq:multipartite}, writing $p_\S : \S_{AB} \twoheadrightarrow \S_A$ and $p_L : L_{AB} \twoheadrightarrow L_A$ for the pair of projections.
Take $h = p_L \cmp \pi_{AB}$ and $g = \pi_A$, with $f = p_\S$.
Then $\phi(h^R)$ (on $\pi_A$) is the marginal of $h^R$ (on $p_L \cmp \pi_{AB}$) along the restriction of description from the bipartite system to Alice's.
Note that, however, this involves probabilities on $p_L$, that is, with which Bob chooses from measurements $b$ and $b'$.
Different probabilities on $p_L$ may lead to different $\phi(h^R)$---%
or maybe not, if the probabilities on $\pi_{AB}$ satisfy the no-signalling property.
More precisely, we have the following (in which we write $\pi_{AB}$ so as to connect to \eqref{eq:multipartite}, but the system can consist of any number of parties).

\begin{theorem}
An $R$-presheaf $\pi_{AB}^R$ on the presheaf $\pi_{AB}$ for a multipartite system satisfies no-signalling iff, for each pair of projections $p_\S$ and $p_L$, $\phi(\pi_{AB}^R \cmp p_L^R)$ is the same regardless of the choice of $R$-map $p_L^R$ on $p_L$.
\end{theorem}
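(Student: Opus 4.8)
The plan is to compute $\phi(\pi_{AB}^R \cmp p_L^R)$ explicitly and then read the equivalence off the resulting formula. I work in the bipartite case of \eqref{eq:multipartite}; the $n$-partite case is the same with heavier indexing, the relevant projections then being those of the iterated fibre product onto the various sub-collections of parties. To begin, note that since the base trees in \eqref{eq:multipartite} have depth one (a root over one tick of $L_\varnothing$, then one vertex per joint measurement choice), an $R$-presheaf $\pi_{AB}^R$ on $\pi_{AB}$ is simply an $R$-map on the surjection $\pi_{AB} : \S_{AB} \twoheadrightarrow L_{AB}$ -- the empirical ``probability table'' itself, assigning to each joint-choice vertex $w$ an $R$-distribution $d^{\pi_{AB}}_w$ on the joint outcomes $\pi_{AB}^{-1}(w) = \S_{AB}(w)$. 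The square of \eqref{eq:multipartite} commutes, i.e.\ $p_L \cmp \pi_{AB} = \pi_A \cmp p_\S =: h$, so \autoref{thm:marginal} (for normalizable $R$, so that $\phi$ is defined), applied with $f = p_\S$ and $g = \pi_A$, sends the $R$-map $\pi_{AB}^R \cmp p_L^R$ on $h$ to a marginal $\phi(\pi_{AB}^R \cmp p_L^R)$, an $R$-map on $\pi_A$: for each measurement choice $a$ of Alice it is an $R$-distribution on Alice's outcomes -- ``Alice's local model, as seen through $p_L^R$''.

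The one computation to carry out is to feed the composition rule \eqref{eq:R.map.composition} into the marginal formula of \autoref{thm:marginal}. Write $d^{p_L}_a$ for the $R$-distribution that $p_L^R$ puts on the fibre of $p_L$ over Alice's vertex for $a$ (a distribution over Bob's measurement choices $m$), write $P(o, o' \mid a, m)$ for the entries of $\pi_{AB}^R$, and put $P_A(o \mid a, m) = \sum_{o'} P(o, o' \mid a, m)$ for the $p_\S$-marginal onto Alice's outcome. Regrouping the sum over $p_\S^{-1}(t)$ according to the value of $\pi_{AB}$ then gives
\[
  d^{\phi(\pi_{AB}^R \cmp p_L^R)}_{a}(o) = \sum_m d^{p_L}_a(m) \cdot P_A(o \mid a, m) ;
\]
in words, $\phi(\pi_{AB}^R \cmp p_L^R)$ is exactly the $p_L^R$-weighted mixture of Alice's marginals over Bob's possible choices, and as $p_L^R$ ranges over the $R$-maps on $p_L$ the weight vector $\bigl( d^{p_L}_a(m) \bigr)_m$ ranges over all $R$-distributions on Bob's choices.

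The equivalence is then immediate. If the model is no-signalling to Alice, $P_A(o \mid a, m)$ is independent of $m$, so by normalization of $d^{p_L}_a$ the mixture collapses to this common value and $\phi(\pi_{AB}^R \cmp p_L^R)$ does not depend on $p_L^R$. Conversely, feeding in the two Dirac $R$-maps $p_L^R$ concentrated at distinct choices $m, m'$ for Bob (legitimate $R$-distributions in any commutative semiring, with values $1$ and $0$) and using that $\phi(\pi_{AB}^R \cmp p_L^R)$ is assumed the same for both, one reads off $P_A(o \mid a, m) = P_A(o \mid a, m')$ for all $o$; running this over Alice's choices, and then over the remaining pairs of projections (onto each party, resp.\ onto each sub-collection of parties), yields no-signalling in full.

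These steps are essentially bookkeeping around \autoref{thm:marginal}, and that is exactly where I expect what little friction there is: one must recognize that $\phi(\pi_{AB}^R \cmp p_L^R)$ is precisely the $p_L^R$-averaged marginal empirical model of the projected sub-collection of parties, so that ``the average is independent of the averaging measure'' coincides with ``all the averaged marginals agree'', which is the content of no-signalling. Keeping track of the fibres $p_\S^{-1}(t)$ and $\pi_{AB}^{-1}(w)$ when combining \eqref{eq:R.map.composition} with the marginal formula into the displayed identity, and verifying that the fibre-product structure of \eqref{eq:multipartite} supplies exactly the projections needed to phrase no-signalling, are the remaining routine checks.
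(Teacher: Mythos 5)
Your proposal is correct and follows essentially the same route as the paper: the same explicit computation of $d^{\phi(\pi_{AB}^R \cmp p_L^R)}$ by regrouping the sum over $p_\S^{-1}(t)$ along the fibres of $\pi_{AB}$ (your $\sum_m d^{p_L}_a(m)\cdot P_A(o\mid a,m)$ is exactly the paper's displayed identity \eqref{eq:no-signalling}), the same collapse by normalization of $d^{p_L}_a$ for the forward direction, and the same use of Dirac $R$-maps concentrated at two different choices of Bob for the converse. No gaps.
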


\begin{proof}
First observe that, for each $t \in \S_A$, since ${p_\S}^{-1}(t) = \sum_{v \in {p_L}^{-1}(\pi_A(t))} ({p_\S}^{-1}(t) \cap {\pi_{AB}}^{-1}(v))$, we have
\begin{align}
\notag
d^{\phi(\pi_{AB}^R \cmp p_L^R)}_{\pi_A(t)}(t)
  = \sum_{u \in {p_\S}^{-1}(t)} d^{\pi_{AB}^R \cmp p_L^R}_{\pi_A(t)}(u)
& = \sum_{u \in {p_\S}^{-1}(t)} d^{p_L^R}_{\pi_A(t)}(\pi_{AB}(u)) \cdot d^{\pi_{AB}^R}_{\pi_{AB}(u)}(u) \\
\notag
& = \sum_{v \in {p_L}^{-1}({\pi_A(t)})} \sum_{u \in {p_\S}^{-1}(t) \cap {\pi_{AB}}^{-1}(v)} d^{p_L^R}_{\pi_A(t)}(v) \cdot d^{\pi_{AB}^R}_v(u) \\
\label{eq:no-signalling}
& = \sum_{v \in {p_L}^{-1}({\pi_A(t)})} d^{p_L^R}_{\pi_A(t)}(v) \cdot \sum_{u \in {p_\S}^{-1}(t) \cap {\pi_{AB}}^{-1}(v)} d^{\pi_{AB}^R}_v(u) .
\end{align}

Now suppose $\pi_{AB}^R$ satisfies no-signalling.
This means that each $t \in \S_A$ is assigned a real $e(t)$ such that every $v \in {p_L}^{-1}(\pi_A(t))$ satisfies
%
%
% \begin{align*}
$\sum_{u \in {p_\S}^{-1}(t) \cap {\pi_{AB}}^{-1}(v)} d^{\pi_{AB}^R}_v(u) = e(t)$.
% \end{align*}
%
%
(Note that this is independent of choice of $R$-map on $p_L$.)
So, plugging this into \eqref{eq:no-signalling} we see that, for any $R$-map $p_L^R$ on $p_L$, each $t \in \S_A$ has
%
%
% \begin{align*}
$d^{\phi(\pi_{AB}^R \cmp p_L^R)}_{\pi_A(t)}(t)
  = \sum_{v \in {p_L}^{-1}(\pi_A(t))} d^{p_L^R}_{\pi_A(t)}(v) \cdot e(t)
  = e(t)$,
% \end{align*}
%
%
which is independent of the choice of $p_L^R$.

On the other hand, suppose $\phi(\pi_{AB}^R \cmp p_L^R)$ is the same regardless of the choice of $p_L^R$.
Fix any $t \in \S_A$ and any $v_0, v_1 \in {p_L}^{-1}(\pi_A(t))$.
Obviously, for each $i = 0, 1$, there is an $R$-map $p_L^{Ri}$ on $p_L$ such that $d^{p_L^{Ri}}_{\pi_A(t)}(v_i) = 1$.
% , and so \eqref{eq:no-signalling} implies
%
%
% \begin{align*}
% d^{\phi(\pi_{AB}^R \cmp p_L^{Ri})}_{\pi_A(t)}(t)
%   = \sum_{v \in {p_L}^{-1}({\pi_A(t)})} d^{p_L^{Ri}}_{\pi_A(t)}(v) \cdot \sum_{u \in {p_\S}^{-1}(t) \cap {\pi_{AB}}^{-1}(v)} d^{\pi_{AB}^R}_v(u)
%   = \sum_{u \in {p_\S}^{-1}(t) \cap {\pi_{AB}}^{-1}(v_i)} d^{\pi_{AB}^R}_{v_i}(u) ;
% \end{align*}
%
%
% therefore
Then
%
%
% \begin{align*}
$\sum_{u \in {p_\S}^{-1}(t) \cap {\pi_{AB}}^{-1}(v_0)} d^{\pi_{AB}^R}_{v_0}(u)
  = d^{\phi(\pi_{AB}^R \cmp p_L^{R0})}_{\pi_A(t)}(t)
  = d^{\phi(\pi_{AB}^R \cmp p_L^{R1})}_{\pi_A(t)}(t)
  = \sum_{u \in {p_\S}^{-1}(t) \cap {\pi_{AB}}^{-1}(v_1)} d^{\pi_{AB}^R}_{v_1}(u)$
% \end{align*}
%
%
by \eqref{eq:no-signalling}.
Since this holds for any pair of projections $p_\S$ and $p_L$, $\pi_{AB}^R$ satisfies no-signalling.
\end{proof}

\section{Stochastic Relational Presheaves}\label{sec:stochastic.relational.presheaf}

In \autoref{sec:presheaf} we saw how presheaves over trees---%
which are themselves trees---%
can be used as LTSs;
and in \autoref{sec:probability.presheaf} we saw how to add probabilities to such systems.
Generalizing this, this section obtains similarly labelled transition systems with probabilities that are however not trees.

\subsection{Relational Presheaves}\label{sec:relational.presheaf}

We first show how to implement LTSs of a non-tree shape using a presheaf-like structure.
The core idea in using presheaves as LTSs was the following, functorial one:
Let a tree $L$ represent a series of external choices;
assign to each stage in $L$ the set of possible states at that stage;
and connect states from different stages with internally chosen transitions.
This idea involves no intrinsic reason why this connection of transitions should be (reverse) functional, i.e., why the functor we take should be a presheaf.

In fact, here is a reason the functor we take should \emph{not} always be a presheaf.
Consider the following two objectives, each of which may, conceivably, be well motivated.
\begin{enumerate}
\item\label{item:why.relational.hilbert}
For our functor $\S$ from the tree $L$, we may like to take, as values $\S(x)$ for stages $x \in L$, the sets of states in Hilbert spaces instead of just any sets, to express quantum processes straightforwardly.
\item\label{item:why.relational.non.tree}
We may consider a non-deterministic process that involves both branching and colliding (so cannot be a tree, forward or backward).
In fact, when we do a quantum measurement $a$ in one basis and then another $a'$ in another basis, the system may transition from a state $s$ to $t_0$ (after $a$) to $u$ (after $a'$), but may also transition from $s$ to $t_1 \neq t_0$ (after $a$) to the same $u$ (after $a'$).
\end{enumerate}
The use of a presheaf, and in particular of functions $\S(e)$ for edges $e$ of $L$---%
which forces the transition system to be a tree---%
cannot accommodate both \eqref{item:why.relational.hilbert} and \eqref{item:why.relational.non.tree}.
To accommodate a non-tree as in \eqref{item:why.relational.non.tree} in a tree formalism, it is a standard technique to ``unfold'' or ``unravel'' the non-tree into a tree, duplicating the single state $u$ to $u_0$ following $t_0$ and $u_1$ following $t_1$.
This, however, does not go well with \eqref{item:why.relational.hilbert}, since the set $\S(x)$ encompassing $u_0$, $u_1$, and all the required duplicates may have to be much more complicated than just the set of states of a Hilbert space.
This is why we should at least sometimes let $\S(e)$, for edges $e$ of $L$, be relations in general rather than functions.
Then, in \eqref{item:why.relational.non.tree}, the state $s$ can be connected to both $t_0$ and $t_1$ while both $t_0$ and $t_1$ connected to $u$.

So, instead of the category $\Sets$ of sets and functions, we take the category $\Rel$ of sets and relations as the codomain of our functors (see \cite{car87} and \cite[esp.\ Ch.\ II]{fre90} for categorical characterizations of $\Rel$ and its generalizations).
For the sake of notation, let us enter
% To review a notation, we write $f : X \rel Y$ for relations $f \subseteq X \times Y$;
% we write $s \xrightarrow{f} t$ instead of $(s, t) \in f$;
% and, identifying $f : X \rel Y$ with $f : X \to \pw(Y)$, we sometimes write $f(s) = \{\, t \in Y \mid s \xrightarrow{f} t \,\}$.

\begin{definition}
$\Rel$ is the category of sets and relations.
% Its objects are the sets, and its arrows from a set $X$ to another $Y$ are the relations $f : X \rel Y$.
Its objects are sets, and its arrows from a set $X$ to another $Y$ are relations $f \subseteq X \times Y$, written $f : X \rel Y$ as well.
We write $s \xrightarrow{f} t$ instead of $(s, t) \in f$, and, identifying $f : X \rel Y$ with $f : X \to \pw(Y)$, sometimes write $f(s) = \{\, t \in Y \mid s \xrightarrow{f} t \,\}$.
The composition $g \cmp f : X \rel Z$ of $f : X \rel Y$ and $g : Y \rel Z$ is defined so that $s \xrightarrow{g \cmp f} u$ iff $s \xrightarrow{f} t \xrightarrow{g} u$ for some $t \in Y$.
\end{definition}

$\Rel$ is a dagger compact category.
Firstly, it has a $\dagger$ structure:
Any $f : X \rel Y$ has a unique opposite relation $f^\dagger : Y \rel X$, so that $s \xrightarrow{f^\dagger} t$ iff $t \xrightarrow{f} s$.
Also, even though the cartesian product is no longer the product in the categorical sense in $\Rel$, it is still a monoidal product $\otimes$.
In addition, the identification of $f : X \rel Y$ with $f : X \to \pw(Y)$ is just one aspect of the fact that $\Rel$ is the Kleisli category $\Kl(\pw)$ of the powerset monad $\pw$ on $\Sets$.
Now, let us finally provide

\begin{definition}
A \emph{relational presheaf} over a category $\mathbf{C}$ is a covariant functor from $\mathbf{C}$ to $\Rel$.%
\footnote{\strut%
Rosenthal \cite{ros96} defines a relational presheaf as a ``lax'' functor;
Soboci\'nski \cite{sob12} follows this ``lax'' definition in his account of relational presheaves as LTSs.
In contrast, I define a relational presheaf ``strongly''.
}
\end{definition}

So we generalize presheaves with relational presheaves as our LTSs.
We must note that relational presheaves are covariant and not contravariant.
Thus, given an edge $e = (x, y)$ of a tree $L$, the system's transition from states at stage $x$ to ones at stage $y$ is represented by a relation $\S(e) : \S(x) \rel \S(y)$ in a relational presheaf $\S : L \to \Rel$, whereas by a function $\S(e) : \S(y) \to \S(x)$ in a presheaf $\S : L^\op \to \Sets$.

It may be worth noting that, although relational presheaves over a tree of labels are themselves LTSs, they are also a generalization of the ordinary kind of LTSs in the following sense.
As Hermida \cite{her11} observes, given a set $L$ of labels, the (ordinary) transition systems labelled by $L$ are, in our terminology, the relational presheaves over the free monoid $L^\ast$ generated by $L$.
Our notion of relational presheaf as a LTS generalizes this by replacing $L^\ast$---%
a tree in which every (type of) edge is followed by every other (type of) edge---%
with a general tree, and permitting different stages to have different sets of states.

It is also worth noting that a small part of the presheaf-fibration equivalence (\autoref{thm:presheaf.fibration.equiv}) applies to relational presheaves, as relational presheaves over a tree $L$ can be regarded as ``open'' bundles over $L$:
The equivalence extends to an essentially surjective and faithful functor from the category of rooted and open bundles over a rooted tree $L$ to that of rooted relational presheaves over $L$.
(Again, we omit the definitions and proof in this abstract.)

\subsection{Adding Probabilities to Relational Presheaves}\label{sec:probability.relational.presheaf}

We added probabilities to presheaves as LTSs in \autoref{sec:probability.presheaf}.
In this subsection, we add probabilities to relational presheaves, which we introduced in \autoref{sec:relational.presheaf}.
This can be done by simply replacing the functional elements of the definitions in \autoref{sec:probability.presheaf} with relational elements.
(We should recall that, in the generalization given in \autoref{sec:relational.presheaf}, a relation $f : X \rel Y$ generalizes a function $f : Y \to X$ of the opposite direction.)

\begin{definition}
We define an \emph{$R$-relation} from a set $X$ to another $Y$ as an ``entire'' relation $f : X \rel Y$ (i.e., such that each $s \in X$ has some $t \in Y$ with $s \xrightarrow{f} t$;
note that, unlike the case of $R$-maps, we do not flip the direction of $f$ for $R$-relations) equipped with, for each $s \in X$, an $R$-distribution $d^f_s$ on $Y$ with support
\begin{gather}\label{eq:R.rel.support}
\supp{d^f_s} \subseteq f(s) .
\end{gather}
(We say that an $R$-relation is \emph{on} its underlying relation.)
% \end{definition}
%
%
% \begin{definition}
Given two $R$-relations $f : X \rel Y$ and $g : Y \rel Z$, let their composition $g \cmp f : X \rel Z$ have, for each $s \in X$, an $R$-distribution $d^{g \cmp f}_s$ on $Z$ such that
\begin{gather}\label{eq:R.rel.composition}
d^{g \cmp f}_s(u) = \sum_{t \in Y} d^f_s(t) \cdot d^g_t(u) .
\end{gather}
Write $R\RRel$ for the category of sets and $R$-relations.
(It should be clear that the unique $R$-relation on the identity relation $1_X : X \rel X$ is the identity on $X$ in $R\RRel$.)
\end{definition}

This notion of $R$-relation is closely related to that of \emph{stochastic map}.
We discuss this relationship in \autoref{sec:stochastic.relation.math};
it will be significant to the discussion that \eqref{eq:R.rel.support} has ``$\subseteq$'' as opposed to ``$=$''.

Let us compare the equation \eqref{eq:R.rel.composition} with the one \eqref{eq:R.map.composition} for $R$-maps.
For $R$-maps $f : Z \to Y$ and $g : Y \to X$, there is at most one state $t \in Y$ through which the system may transition from a given $s \in X$ to a given $u \in Z$;
so the probability of the transition from $s$ to $u$ is just the probability of this particular path, given by the product of the two transitions, from $s$ to $t$ and from $t$ to $u$.
In contrast, for $R$-relations $f : X \rel Y$ and $g : Y \rel Z$, there can be many paths through which the system may transition from $s \in X$ to $u \in Z$;
yet, since these paths are mutually exclusive, we can just sum their probabilities up to obtain the probability of the transition from $s$ to $u$.
Lastly, enter

\begin{definition}
An \emph{$R$-relational presheaf} over a category $\mathbf{C}$ is a covariant functor from $\mathbf{C}$ to $R\RRel$.
(We say that an $R$-relational presheaf is \emph{on} its underlying relational presheaf.)
\end{definition}

This definition provides a structure that integrates the three frameworks \eqref{item:presheaf}--\eqref{item:stoch} mentioned in Introduction:
An $R$-relational presheaf $\S : L \to R\RRel$ over a tree $L$ forms an $L$-LTS in which internal choices take place with probabilities and possibly in a non-tree fashion.

\begin{example}
Let a tree $L$ represent a branching family of series of quantum measurements, gates, and other operations that can be performed.
Then, for stages $x \in L$, let $\S(x)$ be sets of states in (possibly, though not necessarily, identical) Hilbert spaces, and, for each edge $e = (x, y)$ of $L$, let $\S(e) : \S(x) \rel \S(y)$ be the $\RRp$-relation that models the operation $e$ in Hilbert-space terms, such as projections (branching with probabilities) to the suitable measurement basis.
If $L$ is moreover a free monoid and $\S(x)$ are all identical (as in Hermida's \cite{her11} formulation of transition systems mentioned in \autoref{sec:relational.presheaf}), models amount essentially to ones given in Baltag and Smets \cite{bal06}.
\end{example}

This example gives a straightforward representation of quantum protocols.
So it is not surprising at all that we can find non-local or contextual behaviors in such representations.
Yet, using more general values than Hilbert spaces, $R$-relational presheaves can model not only the presence but also the absence of non-locality and contextuality, and indeed characterize contextuality, as we will see in \autoref{sec:contextuality}.

\subsection{Relation to Other Work and Formulations}\label{sec:stochastic.relation.math}

The notion of $R$-relation is closely related to that of \emph{stochastic map}, or equivalently to Kleisli maps of the \emph{distribution monad}.%
\footnote{\strut%
I thank an anonymous referee for his/her comments regarding the relation between $R\RRel$ and $\Kl(\D_\RRp)$, which prompted me to write this subsection as a reply.
}
A stochastic map from a set $X$ to another $Y$ is an $X$-indexed family of $\RRp$-distributions on $Y$, with the composition defined exactly by \eqref{eq:R.rel.composition}.
This can also be rewritten using

\begin{definition}\label{def:R.dist.monad}
Given any set $X$, write $\D_R(X)$ for the set of $R$-distributions on $X$.
This gives rise to the $R$-distribution functor $\D_R : \Sets \to \Sets$ (see \cite{jac10} as well as \cite[\Sect 2.3]{abr11}), which is in fact a monad on $\Sets$ (see \cite{jac10}).
\end{definition}

Then the stochastic maps $f$ from a set $X$ to another $Y$ are exactly the functions $f : X \to \D_\RRp(Y)$, the Kleisli maps of $\D_\RRp$.
Moreover the Kleisli composition amounts to \eqref{eq:R.rel.composition}, and so the category $\Stoch$ of sets and stochastic maps is the Kleisli category $\Kl(\D_\RRp)$ of $\D_\RRp$ (see \cite[\Sect 2]{jac11}).

This is closely related to $R\RRel$, but not exactly the same (aside from $R$ generalizing $\RRp$):
In short, an $\RRp$-relation on a relation $f$ is a stochastic map with an extra piece of information, namely, the underlying relation $f$.
To express this formally, consider the following subfunctor of $\pw \times \D_R : \Sets \to \Sets$.
\begin{gather*}
\T : X \mapsto \sum_{\S \in \pw(X)} \D_R(\S) = \{\, (\S, d) \in \pw(X) \times \D_R(X) \mid d \in \D_R(\S) \,\} .
\end{gather*}
(We identify $d \in \D_R(X)$ and $d \in \D_R(\S)$, as long as $\supp{d} \subseteq \S, X$.)
Then the $R$-relations $f$ from a set $X$ to another $Y$ are exactly the functions $f : X \to \T(Y)$, with a $\pw(Y)$ component.
The two sets $\T(Y)$ and $\D_R(Y)$ are related by the projection $p : (\S, d) \mapsto d$ and a section $s : d \mapsto (\supp{d}, d)$, but $s \cmp p \neq 1$ since we have ``$\subseteq$'' as opposed to ``$=$'' in \eqref{eq:R.rel.support}.
Thus an $R$-relation $f : X \to \T(Y)$ carries properly more information, of the underlying relation, than a stochastic map $f : X \to \D_R(Y)$.
More categorically put, postcomposing $p$ and $s$ with Kleisli maps gives a retraction and a section of categories so that

\begin{fact}
$\Stoch = \Kl(\D_\RRp)$ is a retract of $\RRp\RRel$, but the retraction is not faithful.
\end{fact}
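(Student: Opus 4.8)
The plan is to exhibit explicitly the functors between $\Stoch = \Kl(\D_\RRp)$ and $\RRp\RRel$ induced by the section $s\colon d \mapsto (\supp{d}, d)$ and the projection $p\colon (\S,d)\mapsto d$ discussed just above, verify that they are functors and that their composite (in the appropriate order) is the identity, and finally check that the retraction functor fails to be faithful by a one-object counterexample. Concretely, I would define $S\colon \Stoch \to \RRp\RRel$ on objects as the identity on sets, and on a stochastic map $f\colon X \to \D_\RRp(Y)$ by letting $S(f)$ be the $\RRp$-relation whose underlying relation is $\{(x,y) \mid y \in \supp{f(x)}\}$ and whose distribution data is $d^{S(f)}_x = f(x)$; this relation is entire precisely because distributions are normalized (so have nonempty support), and the support condition \eqref{eq:R.rel.support} holds with equality. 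Dually, $P\colon \RRp\RRel \to \Stoch$ is the identity on objects and forgets the underlying relation, sending an $\RRp$-relation $g$ to the stochastic map $x \mapsto d^g_x$.

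The next step is to verify functoriality of both $S$ and $P$. For $P$ this is immediate: the composition formula \eqref{eq:R.rel.composition} for $R$-relations is literally the Kleisli composition formula for $\D_\RRp$, and identities match by the parenthetical remark in the definition of $R\RRel$. For $S$ one must check two things: that $S$ preserves identities (the identity stochastic map $x \mapsto \delta_x$ has support $\{x\}$, giving exactly the identity relation with Dirac data, which is the identity in $\RRp\RRel$), and that $S(g \cmp f) = S(g) \cmp S(f)$. The distribution components agree by the same shared formula \eqref{eq:R.rel.composition}; the only genuine content is that the underlying relations agree, i.e.\ that the support of the composite distribution $\sum_{t} d^f_x(t)\cdot d^g_t(y)$ equals the relational composite of the two support relations. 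This uses that $\RRp$ has no zero divisors and no cancellation of a positive sum of positive terms: $\sum_t d^f_x(t)\,d^g_t(y) > 0$ iff some summand is positive iff there is $t$ with $y \in \supp{d^g_t}$ and $t \in \supp{d^f_x}$. Then $P \cmp S = 1_{\Stoch}$ is clear since $S$ only adds the underlying relation and $P$ discards it, making $\Stoch$ a retract of $\RRp\RRel$.

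Finally, to see the retraction is not faithful, I would give two distinct parallel $\RRp$-relations with the same distribution data: take $X = Y = \{0\}$ (or any singleton $Y$), no — better, take $Y = \{0,1\}$, $X = \{*\}$, and let both $g_1, g_2 \colon X \rel Y$ carry the distribution $d_*(0) = 1$, $d_*(1) = 0$, but let $g_1$ have underlying relation $\{(*,0)\}$ while $g_2$ has underlying relation $\{(*,0),(*,1)\}$ — both are entire and both satisfy $\supp{d_*} \subseteq g_i(*)$, yet $g_1 \neq g_2$ as $\RRp$-relations while $P(g_1) = P(g_2)$. This is exactly the failure of $s \cmp p = 1$ noted in the text, promoted to the categorical level, and it completes the proof. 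I do not expect any real obstacle here; the only point requiring a moment's care is the support-of-composite computation in functoriality of $S$, which is where the semiring-theoretic property of $\RRp$ (no zero divisors, positivity) is actually used, and it is worth stating that explicitly rather than waving at \eqref{eq:R.rel.composition}.
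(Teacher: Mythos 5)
Your proposal is correct and follows exactly the route the paper sketches (the paper gives no explicit proof, merely asserting that postcomposing the projection $p$ and section $s$ with Kleisli maps yields the retraction): you spell out the functors $S$ and $P$, the identity $P \cmp S = 1$, and the failure of faithfulness coming from $\supseteq$ being allowed strictly in \eqref{eq:R.rel.support}. Your added care about functoriality of $S$ --- that the support of the composite distribution equals the relational composite of supports, using positivity and absence of zero divisors in $\RRp$ --- is precisely the one detail worth making explicit, and you do so correctly.
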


The extra piece of information may appear redundant, as long as we are concerned with probabilities of transitions;
yet that piece of information sometimes proves useful.
In such a model as in \eqref{eq:change.of.base} or \eqref{eq:multipartite}, the underlying relational presheaf $\S$ describes the ``logical'' constraint of which states can be ``logically'' connected to which states;
for instance, on the left of \eqref{eq:change.of.base}, state $00$ can follow $0$ but cannot $1$.
When we add the ``physical'' information of probabilities to $\S$ by taking an $R$-relational presheaf on $\S$, the ``logical'' information is sometimes entailed by supports, but not always so:
If the edge connecting states $0$ and $00$ in \eqref{eq:change.of.base} has probability $0$, then the support cannot tell us whether state $00$ can ``logically'' follow state $0$ or $1$.
It is useful to retain the ``logical'' constraint so as to consider a family of physical models satisfying it, as opposed to just one model---%
it is as useful as having a table of $4 \times 4$ entries that accommodates a family of probability assignments to outcomes in a $(2, 2, 2)$-scenario.
And for this purpose we need to retain the underlying relations, hence using $R\RRel$ as opposed to $\Stoch$.

Lastly, it may be useful to note that the unit and multiplication of the cartesian product monad $\pw \times \D_R$ restrict to the subfunctor $T$,%
\footnote{\strut%
See Definition 2.1.2 of \cite{man07} for a concrete description of a cartesian product monad and its unit and multiplication.
}
and that the composition in $R\RRel$ is the Kleisli composition of $T$;
thus

\begin{fact}
$\T$ is a monad on $\Sets$, and $R\RRel$ is its Kleisli category $\Kl(\T)$.
\end{fact}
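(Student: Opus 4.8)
The plan is to recognize $\T$ as a \emph{submonad} of the cartesian product monad $\pw \times \D_R$ and then to read off its Kleisli category. Since $\pw$ and $\D_R$ are monads on $\Sets$, so is $\pw \times \D_R$ \cite{man07}: its unit is $\eta_X(x) = (\{x\}, \delta_x)$ with $\delta_x$ the point distribution, and its multiplication sends a pair in $\pw((\pw \times \D_R)(X)) \times \D_R((\pw \times \D_R)(X))$ to the pair whose $\pw$-component is $\bigcup \cmp \pw(\pi_1)$ of the first coordinate and whose $\D_R$-component is $\mu^{\D_R} \cmp \D_R(\pi_2)$ of the second, $\pi_1, \pi_2$ being the projections out of $(\pw \times \D_R)(X) = \pw(X) \times \D_R(X)$. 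I would then carry out two steps: (i) check that $\T$ is a subfunctor of $\pw \times \D_R$ that is closed under this unit and multiplication, which makes $\T$ a monad with the restricted structure; and (ii) produce an identity-on-objects isomorphism of categories $R\RRel \cong \Kl(\T)$.

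For step (i): $\T$ is a subfunctor because, for $(\S, d) \in \T(X)$ and $f : X \to Y$, whenever $\D_R(f)(d)(y) \neq 0$ one has $y = f(x)$ for some $x \in \supp{d} \subseteq \S$, so $(\pw \times \D_R)(f)(\S, d) = (\{f(x) : x \in \S\}, \D_R(f)(d))$ again satisfies \eqref{eq:R.rel.support}; and the unit lands in $\T$ since $\supp{\delta_x} = \{x\}$. For the multiplication, let $(\mathcal{U}, \Phi) \in \T(\T(X))$, so $\mathcal{U} \subseteq \T(X)$, $\Phi \in \D_R(\T(X))$ and $\supp{\Phi} \subseteq \mathcal{U}$; its image has $\pw$-component $\bigcup_{(\S, d) \in \mathcal{U}} \S$ and $\D_R$-component $x \mapsto \sum_{(\S, d) \in \T(X)} \Phi(\S, d) \cdot d(x)$, and if the latter is nonzero at $x$ then some $(\S, d)$ has $\Phi(\S, d) \neq 0$ and $d(x) \neq 0$, whence $(\S, d) \in \supp{\Phi} \subseteq \mathcal{U}$ and $x \in \supp{d} \subseteq \S$, so $x$ lies in the $\pw$-component --- i.e.\ \eqref{eq:R.rel.support} is again preserved. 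Since the inclusion $\T \hookrightarrow \pw \times \D_R$ is a pointwise-injective natural transformation compatible with the units and multiplications, the monad axioms for $\T$ follow from those for $\pw \times \D_R$; hence $\T$ is a monad.

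For step (ii): both $R\RRel$ and $\Kl(\T)$ have all sets as objects. A morphism $X \to Y$ of $\Kl(\T)$ is a function $f : X \to \T(Y)$; writing $f(s) = (\S^f_s, d^f_s)$, normalization of distributions gives $\D_R(\varnothing) = \varnothing$ and $\supp{d^f_s} \neq \varnothing$, so $s \mapsto \S^f_s$ is an \emph{entire} relation $X \rel Y$ and each $d^f_s$ is an $R$-distribution on $Y$ with $\supp{d^f_s} \subseteq \S^f_s$ --- exactly the data of an $R$-relation from $X$ to $Y$, and conversely every $R$-relation is packaged by such an $f$. This is a bijection of hom-sets, is the identity on objects, and carries $\eta^\T_X$ to the identity $R$-relation on $1_X$. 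Lastly I would unwind the Kleisli composite $\mu^\T_Z \cmp \T(g) \cmp f$ of $f : X \to \T(Y)$ and $g : Y \to \T(Z)$, with $g(t) = (\S^g_t, d^g_t)$: its $\pw$-component sends $s$ to $\bigcup_{t \in \S^f_s} \S^g_t$, the relational composite of the underlying relations, and its $\D_R$-component sends $s$ to the distribution $u \mapsto \sum_{t \in Y} d^f_s(t) \cdot d^g_t(u)$, which is precisely \eqref{eq:R.rel.composition}. So the bijection is functorial, giving $R\RRel = \Kl(\T)$.

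I expect the only genuine work to be the closure of the $\pw \times \D_R$-multiplication under the support constraint \eqref{eq:R.rel.support} together with the bookkeeping that turns the Kleisli composite into \eqref{eq:R.rel.composition}; neither is deep, and the single thing to watch is the double role played by $\pw$ --- at once the ``underlying relation'' leg of $\T$ and the ambient monad whose multiplication supplies the union. It is also worth remarking that nothing about $R$ is used beyond its being a commutative semiring with $0 \neq 1$, which is what makes $\D_R$ a monad and forces $\D_R(\varnothing) = \varnothing$; in particular normalizability plays no role here.
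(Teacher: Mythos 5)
Your proposal is correct and follows essentially the same route as the paper, which likewise obtains the Fact by observing that the unit and multiplication of the cartesian product monad $\pw \times \D_R$ restrict to the subfunctor $\T$ and that the composition \eqref{eq:R.rel.composition} is exactly the resulting Kleisli composition; you have simply written out the verifications (closure of the multiplication under the support constraint \eqref{eq:R.rel.support}, and entirety of the underlying relation forced by $\D_R(\varnothing) = \varnothing$) that the paper leaves implicit.
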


This puts $R\RRel$ in the tradition \cite[etc.]{law62,gir82,pan98,pan99,dob07,jac11}\ of using algebras for monads to represent stochastic relations.

\section{Dynamic Logic for Contextuality}\label{sec:contextuality}

So far we have laid out $R$-presheaves and $R$-relational presheaves as labelled and stochastic transition systems.
Now we demonstrate that these models are good enough for representing essential features of stochastic dynamics such as shown in quantum systems, by showing that they can characterize non-locality and contextuality;
in fact, the dynamic logic of those transition systems is expressive enough to express this characterization in logical terms.

\subsection{Deterministic Hidden-Variable Models}\label{sec:hidden.variable}

In their sheaf-theoretic approach to non-locality and contextuality, Abramsky and Brandenburger \cite{abr11} provided a characterization of non-locality and contextuality in terms of ``global sections'' of certain presheaves;
see \cite[esp.\ \Sect 3 and \Sect 8]{abr11}.
We can ``translate'' this characterization into our setting of stochastic relational presheaves as follows.

Suppose we have an $\RRp$-presheaf representing an ``empirical model'' for a $(n, k, \ell)$-scenario that satisfies no-signalling in the sense of \autoref{sec:no.signalling}.
(The characterization given in \cite{abr11} is more general than just about $(n, k, \ell)$-scenarios, though I only take $(n, k, \ell)$-scenarios here.
We can translate the characterization in full generality, but omit it in this abstract.)
As an example, let us take an $\RRp$-presheaf $E$ on the presheaf $\S_{AB}$ in \eqref{eq:multipartite} (and assume no-signalling).
Then $E$ is realized by a (factorizable) hidden-variable model if and only if it has a ``global section'' (Theorem 8.1 of \cite{abr11})---%
meaning, in our terms, that there exists an $\RRp$-relational presheaf $H$ on the relational presheaf $\S^h_{AB}$ in
\begin{align}
\label{eq:hidden.variable}
\begin{tikzpicture}[x=50pt,y=40pt,thick,label distance=-0.25em,baseline=(current bounding box.center)]
\node (m) at (1.25,0) {};
\node (T) at (0,2.5) {};
\node (u) at (0,0.25) {};
\node (d) at (0,-0.25) {};
\node [inner sep=0.1em,label=below:{$x$}] (L0O) at (0,0) {$\circ$};
\node [inner sep=0.1em,label=below:{$y$}] (L0Oa) at ($ (L0O) + (m) $) {$\circ$};
\node [inner sep=0.1em,label=below:{$z_{ab}$}] (L0Oab) at ($ (L0Oa) + (m) $) {$\circ$};
\node [inner sep=0.1em,label={[label distance=-0.5em]105:{$s$}}] (S0O) at ($ (L0O) + (T) $) {$\bullet$};
\node [inner sep=0.1em] (S0Oa7) at ($ (S0O) + (m) + (u)!0.5!(0,0) $) {$0111$};
\node [inner sep=0.1em] (S0Oa6) at ($ (S0Oa7) + (u) $) {$0110$};
\node [inner sep=0.1em] (S0Oa5) at ($ (S0Oa6) + (u) $) {$0011$};
\node [inner sep=0.1em] (S0Oa4) at ($ (S0Oa5) + (u) $) {$0010$};
\node [inner sep=0.1em] (S0Oa3) at ($ (S0Oa4) + (u) $) {$0101$};
\node [inner sep=0.1em] (S0Oa2) at ($ (S0Oa3) + (u) $) {$0100$};
\node [inner sep=0.1em] (S0Oa1) at ($ (S0Oa2) + (u) $) {$0001$};
\node [inner sep=0.1em] (S0Oa0) at ($ (S0Oa1) + (u) $) {$0000$};
\node [inner sep=0.1em] (S0Oa8) at ($ (S0Oa7) + (d) $) {$1000$};
\node [inner sep=0.1em] (S0Oa9) at ($ (S0Oa8) + (d) $) {$1001$};
\node [inner sep=0.1em] (S0Oa10) at ($ (S0Oa9) + (d) $) {$1100$};
\node [inner sep=0.1em] (S0Oa11) at ($ (S0Oa10) + (d) $) {$1101$};
\node [inner sep=0.1em] (S0Oa12) at ($ (S0Oa11) + (d) $) {$1010$};
\node [inner sep=0.1em] (S0Oa13) at ($ (S0Oa12) + (d) $) {$1011$};
\node [inner sep=0.1em] (S0Oa14) at ($ (S0Oa13) + (d) $) {$1110$};
\node [inner sep=0.1em] (S0Oa15) at ($ (S0Oa14) + (d) $) {$1111$};
\node [inner sep=0.1em] (S0Oab0) at ($ (S0Oa0)!0.5!(S0Oa3) + (m) $) {$00$};
\node [inner sep=0.1em] (S0Oab1) at ($ (S0Oa4)!0.5!(S0Oa7) + (m) $) {$01$};
\node [inner sep=0.1em] (S0Oab2) at ($ (S0Oa8)!0.5!(S0Oa11) + (m) $) {$10$};
\node [inner sep=0.1em] (S0Oab3) at ($ (S0Oa12)!0.5!(S0Oa15) + (m) $) {$11$};
\node [inner sep=0em] (S0Oa0L) at ($ (S0Oa0) + (-0.25,0) $) {};
\node [inner sep=0em] (S0Oa1L) at ($ (S0Oa1) + (-0.25,0) $) {};
\node [inner sep=0em] (S0Oa2L) at ($ (S0Oa2) + (-0.25,0) $) {};
\node [inner sep=0em] (S0Oa3L) at ($ (S0Oa3) + (-0.25,0) $) {};
\node [inner sep=0em] (S0Oa4L) at ($ (S0Oa4) + (-0.25,0) $) {};
\node [inner sep=0em] (S0Oa5L) at ($ (S0Oa5) + (-0.25,0) $) {};
\node [inner sep=0em] (S0Oa6L) at ($ (S0Oa6) + (-0.25,0) $) {};
\node [inner sep=0em] (S0Oa7L) at ($ (S0Oa7) + (-0.25,0) $) {};
\node [inner sep=0em] (S0Oa8L) at ($ (S0Oa8) + (-0.25,0) $) {};
\node [inner sep=0em] (S0Oa9L) at ($ (S0Oa9) + (-0.25,0) $) {};
\node [inner sep=0em] (S0Oa10L) at ($ (S0Oa10) + (-0.25,0) $) {};
\node [inner sep=0em] (S0Oa11L) at ($ (S0Oa11) + (-0.25,0) $) {};
\node [inner sep=0em] (S0Oa12L) at ($ (S0Oa12) + (-0.25,0) $) {};
\node [inner sep=0em] (S0Oa13L) at ($ (S0Oa13) + (-0.25,0) $) {};
\node [inner sep=0em] (S0Oa14L) at ($ (S0Oa14) + (-0.25,0) $) {};
\node [inner sep=0em] (S0Oa15L) at ($ (S0Oa15) + (-0.25,0) $) {};
\node [inner sep=0em] (S0Oa0R) at ($ (S0Oa0) + (0.25,0) $) {};
\node [inner sep=0em] (S0Oa1R) at ($ (S0Oa1) + (0.25,0) $) {};
\node [inner sep=0em] (S0Oa2R) at ($ (S0Oa2) + (0.25,0) $) {};
\node [inner sep=0em] (S0Oa3R) at ($ (S0Oa3) + (0.25,0) $) {};
\node [inner sep=0em] (S0Oa4R) at ($ (S0Oa4) + (0.25,0) $) {};
\node [inner sep=0em] (S0Oa5R) at ($ (S0Oa5) + (0.25,0) $) {};
\node [inner sep=0em] (S0Oa6R) at ($ (S0Oa6) + (0.25,0) $) {};
\node [inner sep=0em] (S0Oa7R) at ($ (S0Oa7) + (0.25,0) $) {};
\node [inner sep=0em] (S0Oa8R) at ($ (S0Oa8) + (0.25,0) $) {};
\node [inner sep=0em] (S0Oa9R) at ($ (S0Oa9) + (0.25,0) $) {};
\node [inner sep=0em] (S0Oa10R) at ($ (S0Oa10) + (0.25,0) $) {};
\node [inner sep=0em] (S0Oa11R) at ($ (S0Oa11) + (0.25,0) $) {};
\node [inner sep=0em] (S0Oa12R) at ($ (S0Oa12) + (0.25,0) $) {};
\node [inner sep=0em] (S0Oa13R) at ($ (S0Oa13) + (0.25,0) $) {};
\node [inner sep=0em] (S0Oa14R) at ($ (S0Oa14) + (0.25,0) $) {};
\node [inner sep=0em] (S0Oa15R) at ($ (S0Oa15) + (0.25,0) $) {};
\draw [->] (L0O) -- (L0Oa) node [pos=0.5,below] {$i$};
\draw [->] (L0Oa) -- (L0Oab) node [pos=0.5,below] {$ab$};
\draw [->] (S0O) -- (S0Oa0L);
\draw [->] (S0O) -- (S0Oa1L);
\draw [->] (S0O) -- (S0Oa8L);
\draw [->] (S0O) -- (S0Oa9L);
\draw [->] (S0O) -- (S0Oa2L);
\draw [->] (S0O) -- (S0Oa4L);
\draw [->] (S0O) -- (S0Oa6L);
\draw [->] (S0O) -- (S0Oa3L);
\draw [->] (S0O) -- (S0Oa7L);
\draw [->] (S0O) -- (S0Oa11L);
\draw [->] (S0O) -- (S0Oa15L);
\draw [->] (S0O) -- (S0Oa5L);
\draw [->] (S0O) -- (S0Oa10L);
\draw [->] (S0O) -- (S0Oa12L);
\draw [->] (S0O) -- (S0Oa13L);
\draw [->] (S0O) -- (S0Oa14L);
\draw [->] (S0Oa4R) -- (S0Oab1);
\draw [->] (S0Oa5R) -- (S0Oab1);
\draw [->] (S0Oa6R) -- (S0Oab1);
\draw [->] (S0Oa7R) -- (S0Oab1);
\draw [->] (S0Oa8R) -- (S0Oab2);
\draw [->] (S0Oa9R) -- (S0Oab2);
\draw [->] (S0Oa10R) -- (S0Oab2);
\draw [->] (S0Oa11R) -- (S0Oab2);
\draw [->] (S0Oa12R) -- (S0Oab3);
\draw [->] (S0Oa13R) -- (S0Oab3);
\draw [->] (S0Oa14R) -- (S0Oab3);
\draw [->] (S0Oa15R) -- (S0Oab3);
\draw [->] (S0Oa0R) -- (S0Oab0);
\draw [->] (S0Oa1R) -- (S0Oab0);
\draw [->] (S0Oa2R) -- (S0Oab0);
\draw [->] (S0Oa3R) -- (S0Oab0);
\draw [dotted] (L0O) -- (S0O);
\draw [dotted] (L0Oa) -- (S0Oa15);
\draw [dotted] (L0Oab) -- (S0Oab3) -- (S0Oab2) -- (S0Oab1) -- (S0Oab0);
\node (L0) at ($ (L0O) + (-0.5,0) $) {$L^h_{AB}$};
\node (S0) at ($ (S0O) + (-0.5,0) $) {$\S^h_{AB}$};
\draw [->] (S0) -- (L0) node [pos=0.5,left] {$\pi^h_{AB}$};
\node (i) at (1,0) {};
\node (l) at (-0.2,0.175) {};
\node (r) at (0.2,-0.175) {};
% \node (u) at (0,0.25) {};
% \node (d) at (0,-0.25) {};
%
\node [inner sep=0.1em] (L0O) at (4.25,0) {$\circ$};
\node [inner sep=0.1em] (L0Oi) at ($ (L0O) + (i) $) {$\circ$};
\node [inner sep=0.1em] (L0Oii) at ($ (L0Oi) + (i) $) {$\circ$};
\draw [->] (L0O) -- (L0Oi) node [pos=0.5,below] {$i$};
\draw [->] (L0Oi) -- (L0Oii);
\node [inner sep=0.1em] (LAO) at ($ (L0O) + (0,0.875) $) {$\circ$};
\node [inner sep=0.1em] (LAOi) at ($ (LAO) + (i) $) {$\circ$};
\node [inner sep=0.1em] (LAOia) at ($ (LAOi) + (i) + (l) $) {$\circ$};
\node [inner sep=0.1em] (LAOia') at ($ (LAOi) + (i) + (r) $) {$\circ$};
\draw [->] (LAO) -- (LAOi) node [pos=0.5,below] {$i$};
\draw [->] (LAOi) -- (LAOia) node [pos=0.5,above] {$a$};
\draw [->] (LAOi) -- (LAOia') node [pos=0.5,below] {$a'$};
\node [inner sep=0.1em] (SAO) at ($ (LAO) + (0,1.625) $) {$\bullet$};
\node [inner sep=0.1em] (SAOi00) at ($ (SAO) + (i) + (u) + (u) + (u) $) {$00$};% {$\bullet$};
\node [inner sep=0.1em] (SAOi01) at ($ (SAO) + (i) + (u) $) {$01$};% {$\bullet$};
\node [inner sep=0.1em] (SAOi10) at ($ (SAO) + (i) + (d) $) {$10$};% {$\bullet$};
\node [inner sep=0.1em] (SAOi11) at ($ (SAO) + (i) + (d) + (d) + (d) $) {$11$};% {$\bullet$};
\node [inner sep=0.1em,label={[label distance=-0.5em]45:{$0$}}] (SAOia0) at ($ (SAO) + (i) + (i) + (l) + (u) + (u) $) {$\bullet$};
\node [inner sep=0.1em,label={[label distance=-0.5em]45:{$1$}}] (SAOia1) at ($ (SAO) + (i) + (i) + (l) + (d) + (d) $) {$\bullet$};
\node [inner sep=0.1em,label=right:{$0$}] (SAOia'0) at ($ (SAO) + (i) + (i) + (r) + (u) + (u) $) {$\bullet$};
\node [inner sep=0.1em,label=right:{$1$}] (SAOia'1) at ($ (SAO) + (i) + (i) + (r) + (d) + (d) $) {$\bullet$};
\draw [->] (SAO) -- (SAOi00);
\draw [->] (SAO) -- (SAOi01);
\draw [->] (SAO) -- (SAOi10);
\draw [->] (SAO) -- (SAOi11);
\draw [->] (SAOi00) -- (SAOia0);
\draw [->] (SAOi01) -- (SAOia0);
\draw [->] (SAOi10) -- (SAOia1);
\draw [->] (SAOi11) -- (SAOia1);
\draw [->] (SAOi00) -- (SAOia'0);
\draw [->] (SAOi01) -- (SAOia'1);
\draw [->] (SAOi10) -- (SAOia'0);
\draw [->] (SAOi11) -- (SAOia'1);
% \draw [dotted] (LAO) -- (SAO);
% \draw [dotted] (LAOi) -- (SAOi11) -- (SAOi10) -- (SAOi01) -- (SAOi00);
% \draw [dotted] (LAOia) -- (SAOia0);
% \draw [dotted] (LAOia') -- (SAOia'0);
\draw [dotted] (L0O) -- (LAO) -- (SAO);
\draw [dotted] (L0Oi) -- (LAOi) -- (SAOi11) -- (SAOi10) -- (SAOi01) -- (SAOi00);
\draw [dotted] (L0Oii) -- (LAOia) -- (SAOia0);
\draw [dotted] (L0Oii) -- (LAOia') -- (SAOia'0);
\node [inner sep=0.2em] (L0) at ($ (L0O) + (-0.5,0) $) {$L^h_\varnothing$};
\node [inner sep=0.2em] (LA) at ($ (LAO) + (-0.5,0) $) {$L^h_A$};
\node [inner sep=0.2em] (SA) at ($ (SAO) + (-0.5,0) $) {$\S^h_A$};
\draw [->] (SA) -- (LA) node [pos=0.5,left] {$\pi^h_A$};
\draw [->] (LA) -- (L0) node [pos=0.5,left] {$p^h_A$};
\end{tikzpicture}
\end{align}
(complete the picture by adding edges $ab'$, $a'b$, and $a'b'$ to $L^h_{AB}$) from which $E$ is obtained by forgetting the middle stage $y$ with the change-of-base operation as on the left of \eqref{eq:change.of.base}, that is, $E = H \cmp m^\op$ for the embedding $m : L_{AB} \rightarrowtail L^h_{AB}$ that omits $y$.
Here $E = H \cmp m^\op$ means that $E(ab) = H(ab \cmp i) = H(ab) \cmp H(i)$, and hence that, by \eqref{eq:R.rel.composition},
\begin{gather}
\label{eq:averaging.over}
d^{E(ab)}_s(u) = \sum_{t \in H(y)} d^{H(i)}_s(t) \cdot d^{H(ab)}_t(u) .
\end{gather}
This is exactly to ``reproduce the empirically observed probabilities [$d^{E(ab)}_s$] by \emph{averaging} over the hidden variables with respect to the distribution [$d^{H(i)}_s$]'' (\cite{abr11}, p.\ 11).

From this characterization, the following features of $H$ should be obvious:
The set $H(y)$, which is forgotten in $E$, is a set of latent ``instruction sets'' (see \cite{mer81});
moreover, they are \emph{deterministic}, as $H(ab)$ is an $\RRp$-relation on a \emph{function}, as opposed to just any relation, from $H(y)$ to $H(z_{ab}) = E(z_{ab})$.
Thus, the contextuality in a labelled and stochastic transition system $E$ amounts to the failure of $E$ to have such a deterministic hidden-variable model $H$.
A little more formally,

\begin{theorem}\label{thm:global.section}
For an empirical model $E$ (in the sense of \cite{abr11}), the following are equivalent.
\begin{enumerate}
\item\label{item:global.section.realized}
$E$ has a realization by a factorizable hidden-variable model.
\item\label{item:global.section.exists}
$E$ has a global section.
\item\label{item:global.section.forgot}
The $\RRp$-relational presheaf for $E$ is obtained by forgetting the middle stage of a deterministic hidden-variable model.
\end{enumerate}
\end{theorem}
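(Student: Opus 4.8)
The plan is to obtain the equivalence of \eqref{item:global.section.realized} and \eqref{item:global.section.exists} directly from Theorem~8.1 of \cite{abr11}---which also absorbs the factorizability clause, since \cite{abr11} prove that for an empirical model the existence of a factorizable hidden-variable model, of a deterministic one, and of a global section all coincide---and then to establish \eqref{item:global.section.exists}$\Leftrightarrow$\eqref{item:global.section.forgot} by unfolding the translation sketched just before \autoref{thm:global.section}: a global section of $E$ and a deterministic hidden-variable model for $E$ with its middle stage forgotten are, read through the right dictionary, one and the same datum.

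First I would pin down exactly what \eqref{item:global.section.forgot} asserts. For an $(n, k, \ell)$-scenario---and, with routine adjustments, for the general measurement scenarios of \cite{abr11}---fix the rooted relational presheaf $\S^h_{AB}$ of \eqref{eq:hidden.variable}: its root stage carries the singleton $\{s\}$; its middle stage $y$ carries the set of all outcome-assignments (the ``instruction sets''), i.e.\ the functions assigning an outcome to each measurement; the edge $i$ connects $s$ to every one of them; and each edge from $y$ to a maximal-context stage $z_C$ carries the \emph{function} $t \mapsto \rest{t}{C}$ restricting an assignment to $C$. Because each of those last edges is a function, an $\RRp$-relation on it has over a given $t$ no freedom but the point mass at $\rest{t}{C}$; hence a rooted $\RRp$-relational presheaf $H$ on $\S^h_{AB}$ carries exactly one free datum, a distribution $d := d^{H(i)}_s$ on assignments; every such $d$ arises; and functoriality of $H$ is automatic, since the composites that must agree all factor through the deterministic edges at $y$. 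So \eqref{item:global.section.forgot} amounts to the assertion that there is a distribution $d$ on assignments with $E = H \cmp m^\op$, where $m : L_{AB} \rightarrowtail L^h_{AB}$ is the embedding that omits $y$.

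Next I would unfold $E = H \cmp m^\op$. As recorded just before the theorem, it reads $E(C) = H(C) \cmp H(i)$ for each maximal context $C$; substituting the point mass at $\rest{t}{C}$ for $d^{H(C)}_t$ into \eqref{eq:R.rel.composition}---equivalently, into \eqref{eq:averaging.over}---collapses the composite to $d^{E(C)}_s(u) = \sum_{\{t \,:\, \rest{t}{C} = u\}} d(t)$. Ranging over all maximal contexts $C$, this says verbatim that $d$ is a global section of $E$ in the sense of \cite{abr11}: the marginal of $d$ along restriction to each $C$ is the empirical distribution $d^{E(C)}_s$. That yields \eqref{item:global.section.exists}$\Leftrightarrow$\eqref{item:global.section.forgot}, and together with Theorem~8.1 of \cite{abr11} it closes the cycle.

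I do not expect a genuine mathematical obstacle; the content is bookkeeping. What wants care is the interface with \cite{abr11}: checking that ``the relational presheaf for $E$'', the change of base along $m$, and the indexing of the maximal contexts and of the assignments are arranged so that \eqref{eq:averaging.over} is literally the averaging formula of \cite{abr11}, and that our ``deterministic'' (the edges at $y$ being on functions) and ``factorizable'' match the corresponding notions there. Once that dictionary is fixed, every step above is a routine unwinding.
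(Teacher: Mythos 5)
Your proposal is correct and follows essentially the same route as the paper: (i)$\Leftrightarrow$(ii) is delegated to Theorem~8.1 of \cite{abr11}, and (ii)$\Leftrightarrow$(iii) comes from identifying \eqref{eq:averaging.over} with the averaging-over-hidden-variables formula of \cite{abr11}. You simply spell out in more detail the bookkeeping (determinism of the edges at $y$ collapsing the composite to a marginal) that the paper's proof leaves implicit.
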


\begin{proof}
``\eqref{item:global.section.realized} iff \eqref{item:global.section.exists}'' is Theorem 8.1 of \cite{abr11}.
``\eqref{item:global.section.exists} iff \eqref{item:global.section.forgot}'' is essentially due to the fact that the equation for ``averaging over'' in \cite{abr11} (p.\ 11) is identical to \eqref{eq:averaging.over}.
\end{proof}

Note that the underlying relational presheaf $\S^h_{AB}$ of $H$ (or any general ones for $(n, k, \ell)$-scenarios) is not provided \textit{ad hoc}, but canonically obtained, in the manner of \eqref{eq:multipartite}, as the fibered cartesian product $\S^h_A \otimes_{L^h_\varnothing} \S^h_B$ of the obvious hidden-variable models $\S^h_A : L^h_A \to \Rel$ for Alice (as in \eqref{eq:hidden.variable} above) and for Bob.

% \begin{align*}
\begin{wrapfigure}{r}{0pt}
\begin{tikzpicture}[x=40pt,y=40pt,thick,label distance=-0.25em,baseline=(current bounding box.center)]
\node (m) at (1,0) {};
\node (u) at (0,0.15) {};
\node (d) at (0,-0.15) {};
\node [inner sep=0.1em] (SDO) at (0,0) {$\bullet$};
\node [inner sep=0.1em] (SDOi0) at ($ (SDO) + (m) + (u) $) {$\bullet$};
\node [inner sep=0.1em] (SDOi1) at ($ (SDO) + (m) + (d) $) {$\bullet$};
\node [inner sep=0.1em] (SDOi0m) at ($ (SDOi0) + (m) $) {$\bullet$};
\node [inner sep=0.1em] (SDOi1m) at ($ (SDOi1) + (m) $) {$\bullet$};
\draw [->] (SDO) -- (SDOi0);
\draw [->] (SDO) -- (SDOi1);
\draw [->] (SDOi0) -- (SDOi0m);
\draw [->] (SDOi1) -- (SDOi1m);
\node [inner sep=0.1em] (SNO) at (0,-0.75) {$\bullet$};
\node [inner sep=0.1em] (SNOi) at ($ (SNO) + (m) $) {$\bullet$};
\node [inner sep=0.1em] (SNOim0) at ($ (SNOi) + (m) + (u) $) {$\bullet$};
\node [inner sep=0.1em] (SNOim1) at ($ (SNOi) + (m) + (d) $) {$\bullet$};
\draw [->] (SNO) -- (SNOi);
\draw [->] (SNOi) -- (SNOim0);
\draw [->] (SNOi) -- (SNOim1);
% \node [inner sep=0.1em] at (2.6,0) {vs.};
\end{tikzpicture}
% \end{align*}
\end{wrapfigure}
To extract an essential idea from the discussion so far, contextuality means, in transitional terms, that a model is inconsistent with the first shape of branching to the right (in which the system internally chooses from latent instruction sets before external choices are made), but has to have the second shape (in which the system internally chooses outcomes when external choices are made).
And the distinction between these two shapes is one of the things modal logic is good at.
Thus we carry on to consider the modal logic of our labelled and stochastic transition systems.

\subsection{Dynamic Logic of Stochastic Relational Presheaf Models}\label{sec:dynamic.logic}

We lay out here how to use $R$-relational presheaves as a semantic structure for modal, dynamic logic.
It turns out that the logic it gives rise to is expressive enough to capture in logical terms the characterization of contextuality we saw in \autoref{sec:hidden.variable}.
(See \cite{har00} for general exposition of dynamic logic.
A modal logic of stochastic relations expressed by algebras for a monad is also found in \cite{dob07}.)

Let us fix some (propositional) language;
for our purposes it needs to have $\wedge$ and $\lnot$.
Then we fix a set of labels $e$ of transition (for instance, we use labels $a$, $b$, $ab$, $ab'$, etc., for a measurement scenario of $(2, 2, 2)$-type).
For each such label $e$, we add ``dynamic modalities'' $\nec{e}$ and $\pos{e}$ to the language;
we may also like to use probability modalities $P({-} \mid e) \gtreqqless r$ for reals $r$.
Since we take the base logic to be classical, $\pos{e}$ can be defined as $\lnot \nec{e} \lnot$, and $\top$, $\vee$, biconditional $\leftrightarrow$ and exclusive disjunction $\oplus$ can be defined as usual.
So we put
\begin{gather*}
\varphi ::= p \mid \varphi \wedge \varphi \mid \lnot \varphi \mid \nec{e} \varphi \mid P(\varphi \mid e) > r \mid P(\varphi \mid e) = r \mid P(\varphi \mid e) < r
\end{gather*}
for propositional letters $p$ and the labels $e$.
For the sake of application to the contextuality in quantum measurements, we let each label be a measurement context (i.e., a jointly performable set of measurements), and each propositional letter have the form $a = k$ for a measurement $a$ and an outcome $k$ of $a$.
(We will mention this application shortly in \autoref{sec:contextuality};
the semantics laid out in the remainder of this subsection can apply equally to other languages of the sort just defined.)

% \begin{align}
% \label{eq:semantics}
\begin{wrapfigure}{r}{0pt}
\begin{tikzpicture}[x=40pt,y=40pt,thick,label distance=-0.25em,baseline=(current bounding box.center)]
\node (i) at (1,0) {};
\node (j) at (1.5,0) {};
\node (l) at (-0.25,0.15) {};
\node (r) at (0.25,-0.15) {};
\node (uu) at (0,0.2) {};
\node (dd) at (0,-0.2) {};
\node [inner sep=0.1em] (LAO) at (0,0) {$\circ$};
\node [inner sep=0.1em] (LAOa) at ($ (LAO) + (j) + (l) $) {$\circ$};
\node [inner sep=0.1em] (LAOa') at ($ (LAO) + (j) + (r) $) {$\circ$};
\draw [->] (LAO) -- (LAOa) node [pos=0.5,above] {$ab$};
\draw [->] (LAO) -- (LAOa') node [pos=0.5,below] {$ab'$};
\node [inner sep=0.1em,label=above:{$s$}] (SAO) at ($ (LAO) + (0,1.5) $) {$\bullet$};
\node [inner sep=0.1em,label=right:{$00$}] (SAOa00) at ($ (SAO) + (j) + (l) + (uu) + (uu) + (uu) $) {$\bullet$};
\node [inner sep=0.1em] (SAOa01) at ($ (SAO) + (j) + (l) + (uu) $) {$\bullet$};
\node [inner sep=0.1em] (SAOa10) at ($ (SAO) + (j) + (l) + (dd) $) {$\bullet$};
\node [inner sep=0.1em] (SAOa11) at ($ (SAO) + (j) + (l) + (dd) + (dd) + (dd) $) {$\bullet$};
\node [inner sep=0.1em,label=right:{$00$}] (SAOa'00) at ($ (SAO) + (j) + (r) + (uu) + (uu) + (uu) $) {$\bullet$};
\node [inner sep=0.1em,label=right:{$01$}] (SAOa'01) at ($ (SAO) + (j) + (r) + (uu) $) {$\bullet$};
\node [inner sep=0.1em,label=right:{$10$}] (SAOa'10) at ($ (SAO) + (j) + (r) + (dd) $) {$\bullet$};
\node [inner sep=0.1em,label=right:{$11$}] (SAOa'11) at ($ (SAO) + (j) + (r) + (dd) + (dd) + (dd) $) {$\bullet$};
\draw [->] (SAO) -- (SAOa00);
\draw [->] (SAO) -- (SAOa01);
\draw [->] (SAO) -- (SAOa10);
\draw [->] (SAO) -- (SAOa11);
\draw [->] (SAO) -- (SAOa'00);
\draw [->] (SAO) -- (SAOa'01);
\draw [->] (SAO) -- (SAOa'10);
\draw [->] (SAO) -- (SAOa'11);
\node [inner sep=0.2em] (LA) at ($ (LAO) + (-0.5,0) $) {$L_0$};
\node [inner sep=0.2em] (SA) at ($ (SAO) + (-0.5,0) $) {$E$};
\draw [->] (SA) -- (LA) node [pos=0.5,left] {$\pi$};
\node [inner sep=0.1em] (L0O) at ($ (LAO) + (0,-0.875) $) {$\circ$};
\node [inner sep=0.1em] (L0Oi) at ($ (L0O) + (j) $) {$\circ$};
\draw [->] (L0O) -- (L0Oi) node [pos=0.5,below] {$a$};
\draw [dotted] (L0O) -- (LAO) -- (SAO);
\draw [dotted] (L0Oi) -- (LAOa) -- (SAOa00);
\draw [dotted] (L0Oi) -- (LAOa') -- (SAOa'00);
\node [inner sep=0.2em] (L0A) at ($ (LA) + (0,-0.875) $) {$L_1$};
\draw [->] (LA) -- (L0A) node [pos=0.5,left] {$p$};
\end{tikzpicture}
% \end{align}
\end{wrapfigure}
For this modal language, $\RRp$-relational presheaves provide models.
Firstly, the labels need interpreting in trees of labels.
We take a family of trees---%
but not necessarily a single tree---%
and fibrations among them with an initial tree $L_0$, so that each label $e$ is an edge of one of the trees.
For instance, the $(2, 2, 2)$-scenario of Alice and Bob described in \eqref{eq:multipartite} has four trees $L_{-}$ of labels and fibrations among them, with $L_{AB}$ initial;
the picture to the right describes $L_0 = L_{AB}$, $L_1 = L_A$, and the fibration $p : L_{AB} \to L_A$.
Labels $ab$ and $ab'$ lie in $L_0$, which takes both Alice and Bob as external to the system;
$a$ lies in $L_1$, which takes Alice as external but Bob as internal.
In general, $p$ may fail to be a fibration or a function to $L_1$ (because, e.g., $L_0$ may have edges $ab$, $bc$, $ca$ while $L_1$ has only $a$, so that $bc$ cannot be projected down to any edge in $L_1$);
but it has to be a partial function \emph{onto} $L_1$, so that $p^\dagger$ is an entire relation on which there can be $R$-relations.

Then we take an $R$-relational presheaf $\pi : E \to L_0$ over the initial tree $L_0$;
to keep describing the $(2, 2, 2)$-scenario as an example, let us take $E$ on $\S_{AB}$ in \eqref{eq:multipartite}.
Now, finally, we can provide interpretations $\Scott{\varphi}$ for sentences $\varphi$ of the language above by first assigning subsets $\Scott{p}$ to propositional letters $p$ and by then extending $\Scott{-}$ recursively.
We use the classical clauses for the Boolean connectives.

The new clauses of recursion for $\Scott{-}$ concern $\nec{e} \varphi$ and $P(\varphi \mid e) \gtreqqless r$.
Since $e$ may not lie in the initial tree $L_0$, let $p : L_0 \to L_1$ be the (perhaps partial) function onto the tree $L_1$ in which $e$ lies.
Then we express the ideas
\begin{itemize}
\item
$\nec{a} \varphi$ means that $\varphi$ will be the case when Alice chooses $a$, regardless of which of $b$ and $b'$ Bob may choose;
\item
$P(\varphi \mid a) \gtreqqless r$ means that the probability with which $\varphi$ will be the case when Alice chooses $a$ is greater than (equal to, or less than) $r$, regardless of which of $b$ and $b'$ Bob may choose;
\end{itemize}
with the following clauses:
\begin{itemize}
\item
$s \in \Scott{\nec{e} \varphi}$ iff $\supp{d^{E(e')}_s} \subseteq \Scott{\varphi}$ for all $e' = (\pi(s), x) \in p^{-1}(e)$;
\item
$s \in \Scott{P(\varphi \mid e) \gtreqqless r}$ iff $\sum_{t \in \S(e')(s) \cap \Scott{\varphi}} d^{E(e')}_s(t) \gtreqqless r$ for all $e' = (\pi(s), x) \in p^{-1}(e)$.
\end{itemize}
It is worth noting that $s$ may fail to be in any of $\Scott{P(\varphi \mid e) > r}$, $\Scott{\cdots = r}$ and $\Scott{\cdots < r}$, when no-signalling fails (this is why the three sentences cannot define each other).
On the other hand, $s \in \Scott{P(\varphi \mid a) = r}$ implies that the model satisfies no-signalling regarding $a$.

Given this semantics, the following axioms and rules are sound (we omit ones regarding probability modalities;
a complete axiomatization is an open problem):
\begin{itemize}
\item
Classical propositional logic.
\item
Standard axioms and rules for every $\nec{e}$:
\begin{align*}
\begin{array}{r@{}l}
\varphi & {} \vdash \psi \\ \hline
\nec{e} \varphi & {} \vdash \nec{e} \psi
\end{array} ,
&&
\vdash \nec{e} \top ,
&&
\nec{e} \varphi \wedge \nec{e} \psi \vdash \nec{e}(\varphi \wedge \psi) .
\end{align*}
\item
Moreover, whenever $e_0$, $e_1$ are such that $p(e_1) = e_0$ for one of the fibrations $p$,
\begin{align}
\label{eq:axiom.labels}
\nec{e_0} \varphi \vdash \nec{e_1} \varphi .
\end{align}
\item
In addition, because $R$-relations are on entire relations and distributions have nonempty supports, the semantics validates
\begin{align*}
\pos{e_0} \pos{e_1} \top \vdash \nec{e_0} \pos{e_1} \top .
\end{align*}
\end{itemize}

\subsection{Dynamic-Logical Characterization of Contextuality}\label{sec:characterization}

The dynamic logic and its semantics introduced in \autoref{sec:dynamic.logic} can provide characterization for contextuality, taking advantage of the kind of semantic structures we studied in \autoref{sec:hidden.variable}.

Recall the characterization of contextuality in \autoref{thm:global.section}.
That is, a model $E$ fails to be contextual iff obtained from a deterministic hidden-variable model $H$ by forgetting the middle stage---%
that is, iff consistent with the possibility that, at the middle stage, i.e., after $i$ and before $ab$ as in \eqref{eq:hidden.variable}, the states are deterministic instruction sets.
So, using labels and propositional letters for measurements and outcomes (as mentioned above in \autoref{sec:dynamic.logic}), let us introduce the sentence $\Det$ expressing determinacy:
\begin{align*}
\Det(a) & := \nec{a}(a = 0) \vee \nec{a}(a = 1) , &
\Det & := \Det(a) \wedge \cdots \wedge \Det(b') .
\end{align*}
Then the description of the Popescu-Rohrlich box \cite{pop94} of $(2, 2, 2)$-type,
\begin{align*}
\Delta_\text{PR} :=
% &
\pos{ab} \top \wedge \cdots \wedge \pos{a'b'} \top % \\
& {} \wedge \nec{ab}(a = 0 \leftrightarrow b = 0) \wedge \nec{ab'}(a = 0 \leftrightarrow b' = 0) \\
& {} \wedge \nec{a'b}(a' = 0 \leftrightarrow b = 0) \wedge \nec{a'b'}(a' = 0 \oplus b' = 0) ,
\end{align*}
entails $\lnot \Det$, using the axioms and rules mentioned above, including the suitable ones of the form \eqref{eq:axiom.labels} such as $\nec{a} \varphi \vdash \nec{ab} \varphi$.
Also, a (partial) description $\Delta_\text{Hardy}$ of the Hardy model \cite{har93},
\begin{align*}
\Delta_\text{Hardy} :=
% &
\pos{i} \pos{ab} \top \wedge \cdots \wedge \pos{i} \pos{a'b'} \top % \\
& {} \wedge \pos{i} \pos{ab}(a = 0 \wedge b = 0) \wedge \nec{i} \nec{ab'}(a = 1 \vee b' = 1) \\
& {} \wedge \nec{i} \nec{a'b}(a' = 1 \vee b = 1) \wedge \nec{i} \nec{a'b'}(a' = 0 \vee b' = 0)
\end{align*}
(note that this description involves label $i$), entails $\lnot \nec{i} \Det$ using the same axioms.

We can generalize these examples.
The upshot, roughly put, will be as follows.
\begin{itemize}
\item
The sentence $\lnot \nec{i} \Det$ characterizes contexuality.
\item
In addition, the sentence $\nec{i} \lnot \Det$ characterizes ``strong contextuality'' (see \cite[\Sect 6]{abr11} for definition).
\end{itemize}
To put this more precisely and to prove it, we need some notation and definitions.
First, fix a set $M$ of measurements, along with a family of measurement contexts (i.e., jointly performable sets of measurements), and for each measurement $a$ a set of outcomes $O_a$---%
we assume $\M$ and $O_a$ to be finite.
Then, in the vocabulary for $\M$ and $O_a$, write
\begin{itemize}
\item
$\Lambda$ for the set of axioms of the form \eqref{eq:axiom.labels}, for any pair of measurement contexts $e_0, e_1 \subseteq M$ such that $e_0 \subseteq e_1$;
and
\item
$\Det$ for the sentence $\bigwedge_{a \in \M} \Det(a)$, where $\Det(a) := \bigvee_{k \in O_a} \nec{a}(a = k)$.
\end{itemize}
Moreover,
\begin{itemize}
\item
By a ``legal'' sentence, let us mean a sentence of the form either $\nec{i} \nec{e} \varphi$, $\pos{i} \pos{e} \varphi$, or $P(\varphi \mid e \cmp i) \gtreqqless r$ in which $e$ is a jointly performable set of measurements and $\varphi$ is a Boolean compound of propositional letters referring to no measurements other than those in $e$.
\end{itemize}
It is clear that legal sentences can be used to describe empirical models (again, in the sense of \cite{abr11}).
More precisely, let $E$ be any empirical model involving $\RRp$- (or $\BB$-) distributions;
i.e., for each measurement context $e$, $E_e$ is a $\RRp$- (or $\BB$-) distribution on the set $\prod_{a \in e} O_a$.
Then
\begin{itemize}
\item
legal $\nec{i} \nec{e} \varphi$ describes $E$ iff $\varphi$ holds of every support of the distribution $E_e$;
\item
legal $\pos{i} \pos{e} \varphi$ describes $E$ iff $\varphi$ holds of some support of the distribution $E_e$;
\item
legal $P(\varphi \mid e \cmp i) \gtreqqless r$ describes $E$ iff $\sum_{f \in \supp{E_e} \text{ and } \varphi \text{ holds of } f} E_e(f) \gtreqqless r$.
\end{itemize}
%
%
% We also say that a set $\Delta$ of legal sentences describes $E$ if, of course, all $\psi \in \Delta$ describe $E$.

Then we finally have

\begin{theorem}\label{thm:characterization}
Let $\Delta$ be a set of legal sentences that contains $\pos{i} \pos{e} \top$ for every maximal measurement context $e$.
Then the following are equivalent.
\begin{enumerate}
\item\label{item:characterization.contextual}
Every empirical model $E$ that $\Delta$ describes is contextual.
\item\label{item:characterization.not.nec.det}
Every stochastic relational presheaf model that satisfies no-signalling and validates $\Lambda$ validates \textup{$\Delta \vdash \lnot \nec{i} \Det$}.
\end{enumerate}
Moreover, the following are equivalent.
\begin{enumerate}
\addtocounter{enumi}{2}
\item\label{item:characterization.strong.contextual}
Every empirical model $E$ that $\Delta$ describes is strongly contextual.
\item\label{item:characterization.nec.not.det}
Every stochastic relational presheaf model that satisfies no-signalling and validates $\Lambda$ validates \textup{$\Delta \vdash \nec{i} \lnot \Det$}.
\end{enumerate}
\end{theorem}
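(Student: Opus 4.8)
The plan is to reduce both equivalences to \autoref{thm:global.section} by building a dictionary between the dynamic-logical and the empirical-model vocabularies, and then reading off what the sentences $\nec{i}\Det$ and $\pos{i}\Det = \lnot\nec{i}\lnot\Det$ assert about the ``middle stage'' of a model. Throughout I take empirical models to be no-signalling, as in \autoref{sec:hidden.variable}.

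First I would set up the dictionary. Given a stochastic relational presheaf model $\M$ that satisfies no-signalling and validates $\Lambda$, and a state $s$, put $E(\M,s)_e := d^{\M(e \cmp i)}_s$ for every measurement context $e$. The no-signalling hypothesis on $\M$ is exactly what makes this family compatible under marginalization, and the axioms $\pos{i}\pos{e}\top \in \Delta$ for maximal $e$ (when $\bigwedge\Delta$ holds at $s$) make the $E(\M,s)_e$ genuine distributions; so $E(\M,s)$ is a bona fide empirical model. Unwinding the semantic clauses for $\nec{i}\nec{e}\varphi$, $\pos{i}\pos{e}\varphi$ and $P(\varphi \mid e \cmp i) \gtreqqless r$ against the three ``describes'' clauses --- using only functoriality, i.e.\ \eqref{eq:R.rel.composition}, and the fact that those clauses read off the support of $d^{\M(i)}_s$ --- gives: the legal sentences that hold at $s$ in $\M$ are precisely those that describe $E(\M,s)$; in particular $\M, s \models \bigwedge\Delta$ forces $\Delta$ to describe $E(\M,s)$. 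Conversely, every no-signalling empirical model $E$ equals $E(\M,s)$ for a model $\M$ built on canonical relational presheaves of the shape of $\S^h_{AB}$ in \eqref{eq:hidden.variable}; such canonical models validate $\Lambda$ by the soundness of \eqref{eq:axiom.labels} noted in \autoref{sec:dynamic.logic}.

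Second I would pin down the logic of $\Det$ at a state $t$ of such an $\M$. Using an instance of $\Lambda$ to pass from $\nec{a}$ to $\nec{e}$ whenever $a \in e$, together with the conjunction axiom $\nec{e}\varphi \wedge \nec{e}\psi \vdash \nec{e}(\varphi \wedge \psi)$ and the fact that distributions have nonempty support, one sees that $t \models \Det$ iff for every context $e$ the distribution $d^{\M(e)}_t$ is a point mass $\delta_{\rest{\lambda_t}{e}}$ for a single global deterministic assignment $\lambda_t$, i.e.\ ``$t$ is'' such a $\lambda_t$. Hence, writing $E = E(\M,s)$: \textbf{(a)} $\M, s \models \nec{i}\Det$ iff every $t$ in the support of $d^{\M(i)}_s$ is some $\lambda_t$, in which case \eqref{eq:averaging.over} exhibits $E$ as an average of deterministic assignments, i.e.\ a deterministic hidden-variable model in the sense of \autoref{thm:global.section}, so $E$ is non-contextual; and the hidden-variable model $H$ furnished by \autoref{thm:global.section} has its root satisfying $\nec{i}\Det$. \textbf{(b)} $\M, s \models \pos{i}\Det$ iff some $t$ in the support of $d^{\M(i)}_s$ is a global assignment $\lambda_t$, and then $\supp{d^{\M(e)}_t} \subseteq \supp{E_e}$ by \eqref{eq:averaging.over}, so $\rest{\lambda_t}{e} \in \supp{E_e}$ for every $e$, i.e.\ $E$ is \emph{not} strongly contextual; conversely, if a global assignment $g$ witnesses that $E$ is not strongly contextual, then splitting a small deterministic summand $\varepsilon \cdot \delta_g$ off $E$ and realizing the (still no-signalling) remainder by a second $i$-branch yields a no-signalling, $\Lambda$-validating $\M$ with $E(\M,s) = E$ and $\M, s \models \pos{i}\Det$.

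With the dictionary and \textbf{(a)}--\textbf{(b)} the four implications are routine. For the implication from \eqref{item:characterization.contextual} to \eqref{item:characterization.not.nec.det}: if some no-signalling $\Lambda$-validating model had a state $s$ with $\M, s \models \bigwedge\Delta \wedge \nec{i}\Det$, then $E(\M,s)$ would be described by $\Delta$ yet non-contextual by \textbf{(a)}, contradicting \eqref{item:characterization.contextual}; so every such model validates $\Delta \vdash \lnot\nec{i}\Det$. For the converse, if $E$ is described by $\Delta$ but non-contextual, the model $H$ from \autoref{thm:global.section} satisfies no-signalling and $\Lambda$, has its root modelling $\bigwedge\Delta$ (dictionary, since $E(H,\cdot) = E$) and $\nec{i}\Det$ (by \textbf{(a)}), so it refutes $\Delta \vdash \lnot\nec{i}\Det$; hence \eqref{item:characterization.not.nec.det} forces every $\Delta$-described $E$ to be contextual. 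The second equivalence is the same argument with $\nec{i}\Det$ replaced by $\pos{i}\Det$, ``contextual'' by ``strongly contextual'', and \textbf{(a)} by \textbf{(b)}. The step I expect to be the main obstacle is the dictionary: one must verify that $E(\M,s)$ is a legitimate empirical model (the only place no-signalling of $\M$ is used), that the support-based semantic clauses match the support-based ``describes'' clauses in both the $\RRp$- and $\BB$-valued cases, and --- for the converse half --- that arbitrary no-signalling empirical models admit canonical $\Lambda$-validating realizations and that the $\varepsilon$-splitting used for \textbf{(b)} leaves the residual family no-signalling.
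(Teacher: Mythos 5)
Your proposal is correct and follows essentially the same route as the paper's proof: the ``dictionary'' you describe is exactly the paper's pair of constructions (extracting $E_e$ from a pointed model by summing $d^{H(e)\cmp H(i)}_s$ over the cells $\bigcap_{a\in e}\Scott{a=f(a)}$, and conversely realizing a non-contextual $E$ by a hidden stage $H(y)=\prod_{a\in\M}O_a$ weighted by the global section, respectively a non-strongly-contextual $E$ by the two-state hidden stage $\{f,t\}$ with the $\varepsilon$-splitting $\varepsilon=\min_e E_e(\rest{f}{e})$), and your observations \textbf{(a)} and \textbf{(b)} are the paper's readings of $\nec{i}\Det$ and $\pos{i}\Det$ via \eqref{eq:averaging.over}. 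The only caveat is cosmetic: $E(\M,s)_e$ should be the push-forward of $d^{\M(e\cmp i)}_s$ onto $\prod_{a\in e}O_a$ via the valuation $\Scott{a=k}$ rather than the distribution on states itself, and in an arbitrary model $t\models\Det$ gives containment of $\supp{d^{\M(e)}_t}$ in a single valuation cell rather than a literal point mass; neither affects the argument.
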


\begin{proof}
Suppose \eqref{item:characterization.contextual} fails;
that is, there is an empirical model $E$ that $\Delta$ describes but that is not contextual, so that $E$ has a global section $E_\M$.
Then construct a stochastic relational presheaf model $\pi : H \to L_0$ as follows.
First build a tree $L_0$ with the edge $i : x \to y$ followed by the edges $e : y \to z_e$ for all the maximal measurement contexts $e$.
Then, for each non-maximal measurement context $e$, build a tree $L_e$ with $i$ followed by $e : y \to z_e$.
Between $L_0$ and $L_e$, we take a (typically partial) function $p_e : L_0 \to L_e$ that maps $i$ to $i$ and any $e'$ such that $e \subseteq e'$ to $e$;
and, whenever $e_0 \subseteq e_1$, we take a function $p_{e_1,e_0} : L_{e_1} \to L_{e_0}$ that maps $i$ to $i$ and $e_1$ to $e_0$.
Now let $H(x) = \{ s \}$;
$H(y) = \prod_{a \in \M} O_a$;
$H(z_e) = \prod_{a \in e} O_a$;
and, moreover, writing $|H(i)|$ and $|H(e)|$ for the underlying relations of the $\RRp$- (or $\BB$-) relations $H(i)$ and $H(e)$,
\begin{itemize}
\item
Let $|H(i)|(s) = \prod_{a \in \M} O_a$ and $d^{H(i)}_s(f) = E_\M(f)$ for each $f \in \prod_{a \in \M} O_a$.
\item
For each $f \in \prod_{a \in \M} O_a$ and maximal context $e$, let $|H(e)|(f) = \{ \rest{f}{e} \}$ (so that $d^{H(e)}_f$ is trivially deterministic).
\item
For each $a \in M$ and $k \in O_a$, let $\Scott{a = k} = \sum_{e \text{ is a maximal context and } a \in e} \{\, g \in \prod_{a \in e} O_a \mid g(a) = k \,\}$.
\end{itemize}
Then it is straightforward to check that $H$ satisfies no-signalling, that $H$ validates $\Lambda$, and that $s \in \Scott{\psi}$ for each $\psi \in \Delta$ since $\psi$ describes $E$.
Yet, since each $f \in \prod_{a \in \M} O_a$ satisfies $f \in \bigcap_{a \in \M} \Scott{\nec{a}(a = f(a))} \subseteq \Scott{\Det}$, we have $s \in \Scott{\nec{i} \Det}$, so $s \notin \Scott{\lnot \nec{i} \Det}$.
Therefore $H$ does not validate $\Delta \vdash \lnot \nec{i} \Det$.
Thus \eqref{item:characterization.not.nec.det} fails.

On the other hand, assuming \eqref{item:characterization.not.nec.det} fails, let $\pi : H \to L_0$ be a stochastic relational presheaf model that validates $\Lambda$ but that has some $s \in \bigcap_{\psi \in \Delta} \Scott{\psi}$ with $s \notin \Scott{\lnot \nec{i} \Det}$.
Since $s \in \Scott{\pos{i} \pos{e} \top}$ for each maximal context $e$, $L_0$ has edges $i : \pi(s) \to y$ and $e : y \to z_e$ for all the maximal contexts $e$.
Then define a $\RRp$- (or $\BB$-) distribution $E_\M$ on $\prod_{a \in \M} O_a$ so that, for each $f \in \prod_{a \in \M} O_a$,
\begin{align*}
E_\M(f) = \sum_{t \in |H(i)|(s) \cap \bigcap_{a \in \M} \Scott{\nec{a}(a = f(a))}} d^{H(i)}_s(t) .
\end{align*}
Also, for each maximal context $e$, define a distribution $E_e$ on $\prod_{a \in e} O_a$ so that, for each $f \in \prod_{a \in e} O_a$,
\begin{align*}
E_e(f) = \sum_{u \in |H(e)| \cmp |H(i)|(s) \cap \bigcap_{a \in e} \Scott{a = f(a)}} d^{H(e) \cmp H(i)}_s(u) .
\end{align*}
Then it is easy to check that the family $E = \{ E_e \}_e$ satisfies the compatibility condition (i.e., no-signalling) and hence is an empirical model, that $\Delta$ describes $E$, and that $E_\M$ is a global section for $E$.
Thus \eqref{item:characterization.contextual} fails.

Suppose \eqref{item:characterization.strong.contextual} fails;
that is, there is an empirical model $E$ that $\Delta$ describes but that is not strongly contextual, so that there is a function $f : \prod_{a \in \M} O_a$ such that $\rest{f}{e} \in \supp{E_e}$ for every context $e$.
Then construct a stochastic relational presheaf model $\pi : H \to L_0$ as follows.
First build trees $L_0$ and $L_e$ as in the first paragraph of this proof.
Now let $H(x) = \{ s \}$;
$H(y) = \{ f, t \}$;
$H(z_e) = \prod_{a \in e} O_a$ for each maximal context $e$;
and, moreover, define $H(i)$ and each $H(e)$ as the following $\RRp$- (or $\BB$-) relations (we lay out how to define $\RRp$-relations, since we can use their supports to define $\BB$-relations).
\begin{itemize}
\item
$|H(i)|(s) = \{ f, t \}$, and $d^{H(i)}_s(f) = \min \{\, E_e(\rest{f}{e}) \mid e$ is a maximal context$\,\}$, so that $d^{H(i)}_s(t) = 1 - d^{H(i)}_s(f)$.
\item
$|H(e)|(f) = \{ \rest{f}{e} \}$ (so that $d^{H(e)}_f$ is trivially deterministic).
\item
$|H(e)|(t) = \prod_{a \in e} O_a$.
If $d^{H(i)}_s(t) = 0$, then $d^{H(e)}_t(g) = 1$ for all $g \in \prod_{a \in e} O_a$.
Otherwise
\begin{align*}
d^{H(e)}_t(g) =
\begin{cases}
\, \dfrac{E_e(g) - d^{H(i)}_s(f)}{d^{H(i)}_s(t)} & \text{if } g = \rest{f}{e} , \\
\, \dfrac{E_e(g)}{d^{H(i)}_s(t)} & \text{otherwise.}
\end{cases}
\end{align*}
\end{itemize}
Lastly, 
%
%
% \begin{itemize}
% \item
for each $a \in M$ and $k \in O_a$, let $\Scott{a = k} = \sum_{e \text{ is a maximal context and } a \in e} \{\, g \in \prod_{a \in e} O_a \mid g(a) = k \,\}$.
% \end{itemize}
%
%
Then it is straightforward to check that $H$ satisfies no-signalling, that $H$ validates $\Lambda$, and that $s \in \Scott{\psi}$ for each $\psi \in \Delta$ since $\psi$ describes $E$.
Yet $f \in \bigcap_{a \in \M} \Scott{\nec{a}(a = f(a))} \subseteq \Scott{\Det}$ implies $s \notin \Scott{\nec{i} \lnot \Det}$.
Therefore $H$ does not validate $\Delta \vdash \nec{i} \lnot \Det$.
Thus \eqref{item:characterization.nec.not.det} fails.

On the other hand, assuming \eqref{item:characterization.nec.not.det} fails, let $\pi : H \to L_0$ be a stochastic relational presheaf model that validates $\Lambda$ but that has some $s \in \bigcap_{\psi \in \Delta} \Scott{\psi}$ with $s \notin \Scott{\nec{i} \lnot \Det}$.
As before, since $s \in \Scott{\pos{i} \pos{e} \top}$ for each maximal context $e$, $L_0$ has edges $i : \pi(s) \to y$ and $e : y \to z_e$ for all the maximal contexts $e$.
Then, as before, for each maximal context $e$, define a distribution $E_e$ on $\prod_{a \in e} O_a$ so that, for each $f \in \prod_{a \in e} O_a$,
\begin{align*}
E_e(f) = \sum_{u \in |H(e)| \cmp |H(i)|(s) \cap \bigcap_{a \in e} \Scott{a = f(a)}} d^{H(e) \cmp H(i)}_s(u) .
\end{align*}
Again it is easy to check that the family $E = \{ E_e \}_e$ is an empirical model that $\Delta$ describes.
Now note that, since $s \notin \Scott{\nec{i} \lnot \Det}$, that is, since $s \in \Scott{\pos{i} \Det}$, some $t \in \supp{d^{H(i)}_s}$ lies in $\Scott{\Det}$ and therefore we have a function $f \in \prod_{a \in \M} O_a$ with which each $a \in \M$ has $t \in \Scott{\nec{a}(a = f(a))}$;
this implies that, for each maximal context $e$, there is $u \in |H(e)| \cmp |H(i)|(s) \cap \bigcap_{a \in e} \Scott{a = f(a)}$ such that $d^{H(e) \cmp H(i)}_s(u) \geqslant d^{H(i)}_s(t) \cdot d^{H(e)}_t(u) > 0$, which means that $\rest{f}{e} \in \supp{E_e}$.
Hence $E$ is not strongly contextual.
Thus \eqref{item:characterization.strong.contextual} fails.
\end{proof}

\section{Conclusion}

In this paper, we have integrated the three frameworks mentioned in the Introduction for capturing non-deterministic processes, \eqref{item:presheaf}--\eqref{item:stoch}, by introducing the category $R\RRel$ of $R$-relations and taking $R$-relational presheaves---%
functors from trees to $R\RRel$.
The resulting structure captures stochastic dynamics with a good enough expressive power, as demonstrated by the fact that it provides a labelled transitional formulation for the sheaf-theoretic approach of Abramsky and Brandenburger \cite{abr11} to non-locality and contextuality, and moreover yielding dynamic logic with a modal-logical characterization of contextuality.
(In fact, our formalism is partially equivalent to the sheaf-theoretic approach, extending the equivalence between presheaves and fibrations.)
Whereas the sheaf-theoretic approach can take advantage of methods of cohomology to calculate conditions for contextuality (see \cite{abr12}), our approach on the other hand has a certain flexibility in the base trees of measurement labels, so that it can readily express contextuality in not just one round of measurements but within a sequence or protocol of measurements.
Thus our approach is expected to complement the sheaf-theoretic approach and extend it to various applications.
Needless to say, applications to other kinds of stochastic dynamics can be expected as well.

\nocite{*}
\bibliographystyle{eptcs}
\bibliography{qpl.2014}

\end{document}